\documentclass[11pt,letterpaper]{article}

\usepackage[margin=1in]{geometry}
\usepackage[english]{babel}
\usepackage{amsmath}
\usepackage{amsfonts}
\usepackage{amssymb}
\usepackage{amsthm}
\usepackage[ruled,vlined]{algorithm2e}
\usepackage{graphicx}
\usepackage{todonotes}
\usepackage{array,multirow}
\usepackage{authblk}
\usepackage{hyperref}

\usepackage{thmtools, thm-restate}
\declaretheorem{theorem}

\newtheorem{lemma}{Lemma}

\usepackage[style=ieee]{biblatex}
        \addbibresource{ref.bib}
\newcommand{\diam}{\mathrm{diam}}

\title{Fault-Tolerant Distributed Directories}
\author[1]{Judith Beesterm\"oller \thanks{jbeesterm@ethz.ch}}
\author[2]{Costas Busch \thanks{kbusch@augusta.edu}}
\author[1]{Roger Wattenhofer \thanks{wattenhofer@ethz.ch}}
\affil[1]{ETH Zurich}
\affil[2]{Augusta University}

\date{}

\begin{document}
\maketitle

\setcounter{page}{0}
\thispagestyle{empty}

\begin{abstract}
    A distributed directory is an overlay data structure on a graph $G$ that helps to access a shared token $t$. The directory supports three operations: {\em publish}, to announce the token, {\em lookup}, to read the contents of the token, and {\em move}, to get exclusive update access to the token. The directory is built upon a hierarchical partition of the graph using either weak or strong clusters. The main mechanism is the maintenance of a {\em directory path} that starts at the root node in the hierarchy and points to the current owner of the token. In the literature, there are known directory algorithms based on hierarchical graph structures, but none of them have considered failures. Given a hierarchical partition, we consider the impact of $f$ edge failures on the functionality and performance of the distributed directory. The edge failures may result in the splitting of clusters into up to $f+1$ connected components and an increase in the number of levels in the hierarchy. To recover the hierarchical partition after failures, we maintain spanning trees in the clusters and their connected components. If $G$ remains connected, we show that each level of the directory path is dilated by only a factor $f$. We also show that the performance of the publish and lookup operations is affected in the worst case by a factor $f$ with respect to the message complexity. The message complexity of the move operation increases by an $f$ factor and the number of additional layers in the hierarchy.
\end{abstract}
\newpage

\section{Introduction}
    We study distributed directories that facilitate access to a shared token $t$ on a weighted graph $G$. The token resides in one of the nodes in the graph which is its current owner. The directory helps to find the token and gives access to it. It supports three operations: (i) {\em publish}, which is used to announce the ownership of the token, (ii) {\em lookup}, which is used to read the contents of the token, and (iii) {\em move}, which is used to move the token to a new owner to obtain exclusive access. These operations may be issued concurrently by the nodes in $G$. Distributed directories that support such operations have been used for distributed transactional memory \cite{Herlihy2007Distributed,Sharma2014distributed}, distributed queues \cite{Tirthapura2006Self, Herlihy2001Competitive}, and mobile object tracking \cite{Baruch1991Concurrent,Sharma2015Near}.
    
    To perform the directory operations, the nodes communicate with messages. The weight $w(e)$ of an edge $e$ represents the cost of sending a message over it. The cost of sending a message from node $u$ to node $v$ is the length of a shortest path between them in $G$ (assuming fixed-size messages). We are interested in solutions with a small message cost per operation. For a lookup operation issued by a node $u$, the minimum message cost is the distance in $G$ between $u$ and the current owner node of token $t$. For a set of $k$ sequential move operations issued by a sequence of nodes $v_1, v_2, \ldots, v_k$, the minimum communication cost to serve all requests is the sum of the distances between consecutive pairs $v_{i}, v_{i+1}$, $1 \leq i < k$.
    
    There are known efficient directory schemes that achieve message complexity close to the optimal (within poly-log factors) \cite{Herlihy2007Distributed,Rai2022load,Sharma2014distributed,Sharma2015Analysis}. These are based on an appropriate hierarchical cluster decomposition of $G$. However, these approaches do not consider failures. In reality, a directory is implemented on a network modeled as a graph $G$. It is common to have unreliable networks with link (edge) failures between nodes. Edge failures result in changing the distances in $G$, and will directly affect the performance of the directory. We study the impact of $f$ edge failures. We examine the costs of repairing the directory and the impact on the operations. We first present the basic directory approach. Then we discuss our contributions with respect to edge failures.
    
        Our directory approach is inspired by the Spiral protocol \cite{Sharma2014distributed} which uses a sparse cover decomposition (clusters may overlap); instead, we use a sparse partition hierarchy $\mathcal{P}$ to study the impact of failures. 
        
        The hierarchy ${\mathcal P}$ consists of $\mathcal{O}(\log D)$ levels, where $D$ is the diameter of $G$. The partition levels are obtained from a $(\sigma, I)$-sparse partition scheme of $G$. For an appropriate $\rho > 1$ (typically a constant), level $i$ is a partition of $G$ into clusters of nodes such that each node belongs to exactly one cluster, each cluster has diameter at most $\sigma \rho^i$, and the $\rho^i$-neighborhood of a node spans at most $I$ clusters.
        
        The clusters in ${\mathcal P}$ can have a weak diameter, as measured with respect to the whole of $G$, or a strong diameter as measured with respect to the cluster's induced subgraph. There are known $(\mathcal{O}(\log n), \mathcal{O}(\log n))$-sparse partition schemes for arbitrary graphs with $n$ nodes, for both weak diameter and strong diameter \cite{jia2005universal,filtser2020scattering}. These improve to $(\mathcal{O}(1), \mathcal{O}(1))$-sparse partition schemes for weak diameter in constant doubling-dimension graphs and fixed minor-free (e.g. planar) graphs.
        
        The distributed directory is implemented by maintaining a directory path $\phi$ from a root node in $G$ to the current owner of token $t$. For this purpose we pick a leader in each cluster of ${\mathcal P}$. At the highest level, there is a single cluster with the root as leader, while at the lowest level each cluster is an individual node of $G$. The directory path is a sequence of pointers from the root toward the owner node of $t$, going through all the intermediate levels of ${\mathcal P}$. 
        
        The directory path $\phi$ is created by a publish operation issued by the first owner of $t$. There will only be one publish operation for $t$, after that $\phi$ is updated at every move operation. A lookup operation issued by node $v$ searches for $\phi$ ``upward'' in the hierarchy by checking all the leaders in the clusters in the $\rho^i$-neighborhood of $v$, for all increasing levels $i$. When it finds a leader that knows about $\phi$, the lookup operation follows the directory path toward the token. A move operation builds a new directory path toward the new owner and it deletes the old one while it searches for $\phi$. The token is then transferred to the new owner. The modifications of the directory path by a move operation make it harder for future lookup operations to discover it. Nevertheless, we will show that the costs of a lookup or move operation are always close to optimal.
        
        We obtain the following bounds for the message complexity for general graphs (see Table \ref{table:performance}). The length of the initial directory path is $\mathcal{O}(\sigma D) = \mathcal{O}(D \cdot \log n)$, which is also the cost of the publish operation. For lookup, the cost is an $\mathcal{O}(\sigma^2 \rho I) = \mathcal{O}(\log^3 n)$ approximation of the optimal. For move, the amortized cost of a sequence of move operations is an $\mathcal{O}( h \rho \sigma (\sigma + I)) = \mathcal{O}( \log D \cdot \log^2 n)$ approximation of the optimal, where $h=\lceil \log_\rho(D)\rceil$ is the number of levels in $\mathcal{P}$. For weak diameter and special kinds of graphs that use a $(\mathcal{O}(1),\mathcal{O}(1))$-partition scheme, we get better bounds which are $\mathcal{O}(1)$ approximation for lookup, and $\mathcal{O}(\log D)$ approximation for move, while the publish cost is simply $O(D)$.
        
    \subsection{Edge Failures}
        We consider the impact of $f \geq 1$ edge failures on the directory structure. An edge failure results in the removal of an edge from $G$. We make the assumption that the edge failures do not disconnect $G$, since otherwise, the owner of the token $t$ may become unreachable by other nodes in $G$ which makes the directory unusable. 
        
        The $f$ edge failures may happen at arbitrary moments and can occur concurrently. The goal is to maintain and update ${\mathcal P}$ and the directory path whenever failures occur. We provide mechanisms to handle the failures and respective repairs dynamically, on the fly as failures happen, and in a distributed manner without disrupting ongoing directory operation requests.
        
        Edge failures have adverse effects on the diameter and connectivity of the clusters. In an unweighted graph, $G$ the diameter of $G$ can at most double with every edge failure \cite{chung1984diameter}. However, in an unweighted graph, we cannot bound the diameter, as seen in Figure~\ref{fig: unbounded_diam}. To ensure our directory continues to work efficiently, we add additional layers to our hierarchy.

        \begin{figure}
            \centering
            \includegraphics{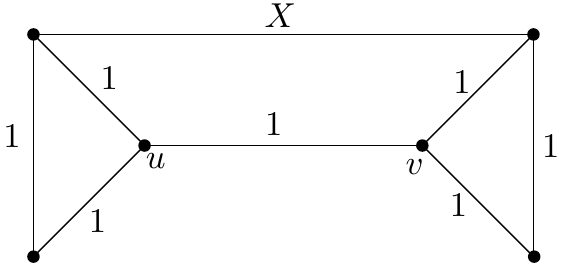}
            \caption{An example of a graph showing that we cannot bound the stretch in the diameter of the graph. Assuming $X>1$ the initial graph has diameter 3. If edge $\{u, v\}$ fails, the diameter becomes $2+X$. Without further assumptions on $X$ this cannot be bounded as a constant multiple of $3$.}
            \label{fig: unbounded_diam}
        \end{figure}

        \begin{table}[t]
                \centering
                \begin{small}
                \begin{tabular}{| l | l | l || l | l | l |}
                \hline\hline
                Graph  & Partition & Failures & Publish & Lookup & Move \\
                \hline\hline
                general & any & none & $\mathcal{O}(D \cdot \log n)$ & $\mathcal{O}(\log^3 n)$ & $\mathcal{O}( \log D \cdot \log^2 n)$\\
                \hline
                general & weak & $f$ & $\mathcal{O}(D' \cdot \log n)$ & $\mathcal{O}(f \cdot \log^3 n)$ &  $\mathcal{O}( f\log D'\log^2(n))$ \\
                \hline
                general & strong & $f$ &  $\mathcal{O}(D' \cdot \log n)$ & $\mathcal{O}(f \cdot \log^2 n + \log^3 n)$ & $\mathcal{O}( \log D'\log n(f+\log n))$ \\
                \hline
                special & any & none & $\mathcal{O}(D)$ & $\mathcal{O}(1)$ & $\mathcal{O}(\log D)$\\
                \hline
                special & weak & $f$ & $\mathcal{O}(D')$ & $\mathcal{O}(f)$ & $\mathcal{O}(f \cdot \log D')$\\
                \hline\hline
                \end{tabular}
                \end{small}
                \caption{Cost of operations for general/special graphs and weak/strong diameter partitions; publish cost is absolute; lookup and move costs are approximation factors; failures are $f \geq 1$; $D'$ is the diameter of $G$ after the $f$ failures; special graphs are constant doubling dimension graphs and fixed minor-free graphs with $\sigma, I \in \mathcal{O}(1)$}
                \label{table:performance}
            \end{table}

        \paragraph{Cluster restructuring after failures.}
            To mitigate the effects of edge failures on a cluster $X$ at level $i<h$, we split $X$ into up to $f+1$ new clusters such that the diameter of each new cluster is no more than twice the original diameter of $X$. To update a single cluster and inform all affected nodes requires $\mathcal{O}(n^2)$ messages that each have size and cost $\mathcal{O}(\log(n))$. The costs of the other repair mechanisms are displayed in Table \ref{table: repair}. To mitigate the effect of the increase in the overall graph diameter, we increase the number of layers in the partition to $\log_\rho(D')$, where $D'$ denotes the diameter of the graph after the edge failures. 

            A cluster split affects the parameters $\sigma$ and $I$ of the partition hierarchy ${\mathcal P}$. Since the diameter of the new cluster doubles, $\sigma$ changes to $2 \sigma$. The parameter $I$ is also affected, to give $(2\sigma, (f+1)I)$-partition scheme for weak diameter and $(2\sigma, f+I)$-partition scheme for strong diameter. The results on the performance of the three operations scale accordingly with $f$ and $D'$ as shown in Table \ref{table:performance}. Strong partitions respond asymptotically better to failures in general graphs when $f = \Omega(\log n)$. However, weak diameter partitions are better for graphs having $\sigma, I \in \mathcal{O}(1)$.
            
            The technique that we use to achieve constant dilation in the cluster diameter is by maintaining a spanning tree of the clusters. Each time an edge $e$ of $X$ fails, we split $X$ into two new clusters where the first has the original leader of $X$, while the second has as leader one of the incident nodes of $e$. In weak diameter clusters, the edge $e$ may actually reside outside $X$, in which case the new leader is chosen appropriately in $X$.
            
            In weak diameter partitions, an edge removal may affect multiple clusters at level $i$, since an edge may be in the spanning tree of multiple clusters at level $i$. Hence, $I$ increases to $(f+1)I$, since each of the (at most) $I$ clusters in a $\rho^i$-neighborhood at level $i$ is split to at most $f+1$ clusters. In strong diameter partitions, each edge removal affects at most one cluster at level $i$. Hence, the total number of new clusters at level $i$ increases by $f$ and $I$ becomes $f+I$. 
            
        \paragraph{Maintaining the directory path.} 
            If a cluster $X$ whose leader node is part of the directory path is split due to the edge failure, we need to determine if the directory path needs to be updated as well. The cluster leader $l(X)$ is added to the directory path by a node $w$ in $X$ during a publish or move operation. After the splitting of $X$, we update the directory path such that it contains the leader node of the cluster that contains $w$.
            
            A leader node $l(X)$ that is added to the directory path through a publish or move operation executed by node $w$ remembers $w$. When the splitting of cluster $X$ occurs, the leader node $l(X)$ determines if $w$ is still connected to it on the spanning tree of $X$. If $w$ got disconnected, then $l(X)$ sends a message to $l(X_2)$ to inform it that it is part of the directory path. The message includes information about the leader nodes on the directory path on the level above and below so that $l(X_2)$ knows how to set its pointers to connect to the directory path. Furthermore, the message tells $l(X_2)$ the id of the node that added $l(X)$ to the directory path, so that in case of another failure $l(X_2)$ can determine if another directory path update is needed. When a cluster $X$ splits and the partitioned off cluster is not part of the directory path, then the leader $l(X)$ will nonetheless send a message to $l(X_2)$ informing it that it is not part of the directory path. 
            
            The mechanism that updates the directory path can be executed during the execution of the directory protocol. If $w$ was the last node to modify the directory path at level $i$, then any node that contacts the cluster leader of the cluster containing $w$ will find the directory path at level $i$.

            If we add additional layers to the partition hierarchy, we also need to extend the directory path to these layers. If the root node was last informed about a change in the directory path by a node $w$, then $r$ insures that in the additional layers the cluster that contains node $w$ is part of the directory path. 
        
            \begin{table}[t]
            \resizebox{\textwidth}{!}{%
                \centering
                \begin{small}
                \begin{tabular}{ |p{2.7cm}||p{4.2cm}|p{2.5cm}|p{3.2cm}|p{2.5cm}|  }
                \hline\hline
                Operation & Partition & Size of Message & Number of Messages & Cost of Message\\
                \hline\hline
                \multirow{2}{3cm}{Reclustering (per cluster at level $i$)} & Any & $\mathcal{O}(\log(n))$ & $\mathcal{O}(n^2)$ & $\mathcal{O}(\log(n))$\\\cline{2-5}
                & weak (additional to above) & $\mathcal{O}(n\log(n))$ & 1 & $\mathcal{O}(\log(n))$\\\hline
                Updating Shortest Path Trees & Any & $\mathcal{O}(\log(n))$ & $\mathcal{O}(n^2)$ & $\mathcal{O}(D')$\\\hline
                Update Directory Path at Level $i$ & Any & $\mathcal{O}(\log(n))$ & $\mathcal{O}(1)$ & $\mathcal{O}(D')$\\
                \hline\hline
                \end{tabular}
                \end{small}
                }
                \caption{Cost of repair mechanism for general graphs and strong/weak partitions; $D'$ denotes the diameter of the underlying graph after the edge failure}
                \label{table: repair}
            \end{table}

    \subsection{Related Work}
        
        An alternative way to implement a distributed directory is to use a spanning tree $T$ on $G$. The edges of $T$ are directed towards the owner node of the token. If a node $u$ requests to obtain the token, then the move request redirects the edges of the tree toward $u$ (edge reversal). The benefit of the tree is that it can easily handle distributed requests since concurrent move operations are ordered when they intersect on the tree. Several protocols have been proposed based on trees (Arrow \cite{Kerry1989TreeBased,Demmer1998TheArrow,Herlihy2006Dynamic,Kuhn2004Dynamic}, Relay \cite{Zhang2010Dynamic}, Ivy \cite{Kai1989Memory}, Arvy \cite{Khanchandari2020Arvy}). The approximation factor of the operations is $O(\log D_T)$, with respect to the diameter $D_T$ of tree $T$. However, by using a tree $T$ the performance of the lookup and move operations may be sub-optimal with respect to $G$, as $T$ may not accurately represent the shortest path distances on $G$. Considering the distance stretch $s$ of the tree the approximation becomes $O(s \cdot \log D_T)$, and $s$ can be as large as the graph diameter $D$. Nevertheless, considering an appropriate overlay tree that preserves on average the pairwise node distances of $G$ \cite{Fakcharoenphol2004Tight}, it is possible to get close to optimal performance on the average case for a set of random source operation requests \cite{Peleg1999Variant,Ghodselahi2017Dynamic}. Our approach, on the other hand, has guaranteed performance for arbitrary sources of requests (not just random).
        
        \paragraph{Outline of the Paper.}
        In Section \ref{sec:model}, we give the model of the paper (with details on sparse partitions in Section \ref{app: sparse partition}). In Section \ref{sec: Directory Scheme}, we present the basic directory scheme without failures (the pseudocode is in Section \ref{sec: pesudocode}). Section \ref{sec: Responding to Edge Failures} describes the re-clustering and directory path updates due to edge failures (concurrent edge failures are discussed in Section \ref{app: concurrent edge failures}, and handling dynamically the ongoing operations in Section \ref{app: Dynamic Setting}). The message complexity is given in Section \ref{sec:analysis} (full proofs are deferred to Sections \ref{app: transient operations}, \ref{sec: proofs}, \ref{sec: Analysis without faults}, and \ref{sec: cost fault mechanism}). We conclude in Section \ref{sec: conclusion}.
         

\section{Model} \label{sec:model}
    We model the distributed system as a weighted graph $G=(V, E, w)$ with positive edge weights of at least 1. Nodes communicate with each other asynchronously through messages, but messages sent along the same edge are delivered and processed in the order they are sent. All messages have the same size (typically logarithmic in $n$). The cost of sending a message over an edge is the weight of the edge. While handling edge failures to update the clusters and the directory path, we may use larger size messages. The involved message complexities are explicitly stated in our results. 
    
    Each node $u$ stores a shortest path tree, denoted $T(u)$, with $root(T(u))=u$. Unless stated otherwise, messages are sent along shortest paths. A node knows for each of its incident edges which shortest path trees the edge belongs to, and remembers all messages sent an received on it.
    
    The directory is built on top of a sparse partition hierarchy ${\mathcal P}$. Please refer to Section \ref{app: sparse partition} for a review of sparse partitions. Our mechanisms work for both strong and weak sparse partitions, though, as we will show, they differ in their performance. We write $\diam(X)$ to refer to the strong, respective weak diameter of cluster $X$ depending on the type of partition used.
    
    
    In every cluster $X$ of ${\cal P}$, we select a {\em leader} $l(X)$. At $P_{-1}$ each node is a leader of its own cluster. The leader of the single cluster of $P_h$ is called {\em root}. Every node $u \in V$ belongs to exactly one cluster in each level of ${\cal P}$. Denote by $C_i(u)$ the cluster of $u$ at level $i$, and let $l_i(u) = l(C_i(u))$ be its leader.

    For each cluster $X$ we store a spanning tree $T(X)$. Initially, $T(X)$ is a shortest path tree with $root(T(X)) = l(X)$. If our partition hierarchy $\mathcal{P}$ is based on a sparse partitions, then $T(X)$ might contain some nodes that are not in $X$, otherwise (strong partitions), $T(X)$ contains only nodes from $X$. The choice of spanning tree guarantees that for any node $u$ in $X$ the path on $T(X)$ connecting $u$ and $l(X)$ is at most $\diam(X)$. For any node $u$ on $T(X)$, we denote the subtree of $T(X)$ rooted in $u$ by $T_{\setminus u}(X)$. Every node $u$ on $T(X)$ knows $T_{\setminus u}(X)$ and $l(X)$. If our data structure is built on weak sparse partitions, then $u$ also knows which nodes in $T_{\setminus u}(X)$ belong to $X$.
    
    Before we start the protocol, every node $u$ determines which vertices are at distance at most $r_i$ using its shortest path tree, for all $0\leq i \leq h$. Node $u$ then queries these nodes for the leader node of their cluster at level $i$, and stores this information.


\section{Directory Scheme}\label{sec: Directory Scheme}
    The directory supports three operations: {\em publish} that announces the generation of the token, {\em lookup} that helps find the token, and {\em move} that requests the token to move to a new owner node.
    
    Our hierarchy $\mathcal{P}$ consists of $h+1$ levels, where $h=\mathcal{O}(\log(n))$. At level $-1$ the clusters consist of individual nodes. The diameter of a cluster at level $0\leq i\leq h$ is at most $\sigma r_i$, where $r_i = \min\{D, \rho^i\}$.
    
    The token resides at an owner node at the lowest level. There is a virtual {\em directory path} $\phi$ that points to the current position of the token. The path $\phi$ consists of $h+2$ leader nodes, one node at every level of ${\cal P}$. Let $\phi_i$ denote the leader node of $\phi$ at level $i$, for $-1 \leq i \leq h$, where $\phi_{-1}$ is the token owner and $\phi_h$ is the root. For each level $i$, $0 \leq i \leq h-1$, leader $\phi_i$ has pointers to the lower level leader $\phi_{i-1}$ and the upper level leader $\phi_{i+1}$ which form a virtual doubly linked list.
    
    The directory path is created in publish operation. The creator node $u$ (original owner) sends a message to its leader nodes at every level of ${\cal P}$, namely $\phi_i = l_i(u)$. This operations also creates the pointers from $\phi_i$ to $\phi_{i-1}$ (both ways) for $0 \leq i \leq h$.
    
    The following theorem helps to bound the cost of the publish operation.
    \begin{theorem}
        The initial directory path has length $\mathcal{O}(\sigma\rho^h)$.
    \end{theorem}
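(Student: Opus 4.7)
The plan is to interpret the ``length'' of the initial directory path $\phi$ as the sum of the distances $d(\phi_{i-1},\phi_i)$ between the consecutive leaders on $\phi$, for $0 \le i \le h$, and to bound each term using the diameter bound of the cluster hierarchy. The key observation is that immediately after the publish operation by the original owner $u$, every leader on the path is of the form $\phi_i = l_i(u)$, that is, the leader of $u$'s own cluster at level $i$. Since $u$ and $\phi_i$ lie in the same cluster $C_i(u)$, which has diameter at most $\sigma r_i$, we get $d(u,\phi_i) \le \sigma r_i$ for every level $i$ (and trivially $d(u,\phi_{-1}) = 0$ since $\phi_{-1} = u$).

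Next, I would apply the triangle inequality through $u$ to bound the length of each link of the doubly linked list:
\begin{equation*}
d(\phi_{i-1},\phi_i) \;\le\; d(\phi_{i-1},u) + d(u,\phi_i) \;\le\; \sigma r_{i-1} + \sigma r_i \;\le\; 2\sigma r_i,
\end{equation*}
where I use $r_{i-1} \le r_i$. Summing over all levels and using $r_i = \min\{D,\rho^i\} \le \rho^i$,
\begin{equation*}
\sum_{i=0}^{h} d(\phi_{i-1},\phi_i) \;\le\; 2\sigma \sum_{i=0}^{h} r_i \;\le\; 2\sigma \sum_{i=0}^{h} \rho^i \;=\; 2\sigma \cdot \frac{\rho^{h+1}-1}{\rho - 1} \;=\; \mathcal{O}(\sigma \rho^h),
\end{equation*}
since $\rho > 1$ is a constant, so the geometric series is dominated by its last term.

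The argument has essentially no obstacle: the bound is a standard geometric-series telescoping that exploits the exponentially increasing cluster radii $r_i = \min\{D,\rho^i\}$. The only thing to be careful about is the convention for the path length at the top level, where $\phi_h$ is the root leader of the single top-level cluster containing all of $G$; since $r_h \ge D$ would dominate the sum, one should observe that $r_h$ is capped at $D$ and that in any case $\sigma r_h = \mathcal{O}(\sigma \rho^h)$ as well, so the top-level contribution is absorbed into the same asymptotic bound. This yields the stated $\mathcal{O}(\sigma \rho^h)$ bound on the initial directory path, which is also the cost incurred by the publish operation when the original owner $u$ installs the pointers $\phi_i \to \phi_{i-1}$ and $\phi_i \to \phi_{i+1}$ along shortest paths in $G$.
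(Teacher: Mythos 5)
Your proof is correct and follows essentially the same route as the paper's: triangle inequality through the original owner $u$, the cluster-diameter bound $d(u,\phi_i)\le\sigma r_i$, and a geometric sum over the levels. The only cosmetic difference is that you merge $\sigma r_{i-1}+\sigma r_i$ into $2\sigma r_i$ before summing, whereas the paper keeps the two terms separate.
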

    
    \begin{proof}
        The distance between consecutive nodes on the directory path is at most $d(l_i(u), l_{i+1}(u))\leq d(l_i(u), u) + d(u, l_{i+1}(u))$. A cluster at level $i$ has a diameter of at most $r_i\sigma$. Thus,
        \begin{align*}
            length(\phi) 
            \leq\sum_{i=-1}^{h-1} \sigma(r_i + r_{i+1})
            \leq \sum_{i=-1}^{h-1} \sigma(\rho^i + \rho^{i+1})
            \leq \frac{\sigma(\rho + 1)(\rho^{h+1}-1)}{(\rho -1) \rho}. & \qedhere
        \end{align*}
    \end{proof}
    
    To perform lookup and move operations, a node has to first discover the directory path. Once a leader node of the directory path is discovered, the current token owner can be reached by following the directory path toward the object. A lookup operation leaves the directory path intact, whereas a move operation modifies the directory path to point to the new owner of the token.
    
    A lookup operation first discovers a leader node of the directory path, say $\phi_j$ at level $j$, and then reaches the token by following the directory path down to the owner node. It then reads the contents of the token and returns the result to the issuer. The discovery of the directory path happens by exploring leader nodes close to $v$ at increasing levels of ${\mathcal P}$. Let $P_i(v)$ be the set of clusters of partition $P_i$ that intersect with $N_{G,v}(r_i)$; note that $|P_i(v)| \leq I$. For each level $i$, starting at $i=0$ up to $i = h$, node $v$ checks whether any of the leaders in $P_i(v)$ equals $\phi_i$. The exploration stops at the lowest level that a directory path leader is found (at the root in the worst case).

    \begin{lemma}\label{lem: lookup path}
        The path traversed by a lookup operation at level $i$ has length at most $\mathcal{O}(r_i\sigma I)$. 
    \end{lemma}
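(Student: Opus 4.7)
The plan is to upper-bound the contribution at level $i$ by the number of clusters the lookup probes at that level, times the $G$-distance from $v$ to each probed leader. Two ingredients suffice: a sparsity bound on the number of probed clusters, and a diameter bound on how far the relevant leaders can be from $v$.

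First, by the $(\sigma, I)$-sparse-partition property, the $\rho^i$-neighborhood $N_{G,v}(r_i)$ meets at most $I$ clusters of $P_i$, i.e.\ $|P_i(v)| \leq I$. Therefore at level $i$ the lookup issues at most $I$ leader probes, one per cluster in $P_i(v)$.

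Second, for each $X \in P_i(v)$, the definition of $P_i(v)$ guarantees some $u_X \in X$ with $d_G(v, u_X) \leq r_i$. The diameter of $X$ (weak or strong) is at most $\sigma r_i$, and in both variants this is a bound on $G$-distances between members of $X$. Hence $d_G(u_X, l(X)) \leq \sigma r_i$, and by the triangle inequality
\[
    d_G(v, l(X)) \leq r_i + \sigma r_i = \mathcal{O}(\sigma r_i).
\]
Since each probe follows a shortest path in $G$ from $v$ to $l(X)$ (with a symmetric reply), its contribution to the path traversed at level $i$ is $\mathcal{O}(\sigma r_i)$. Summing over the at most $I$ probes yields the claimed $\mathcal{O}(\sigma I r_i)$ bound.

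The main subtlety I expect is the weak-diameter case, where the spanning tree $T(X)$ may include vertices outside $X$; one must verify that the bound $d_G(u_X, l(X)) \leq \sigma r_i$ still applies. This is immediate once we recall that the weak diameter is by definition the maximum $G$-distance between cluster members, and that the lookup's probe messages travel along shortest paths in $G$ rather than along $T(X)$, so any excursions of $T(X)$ outside $X$ do not enlarge the probe distance. Beyond this verification the argument is just an application of the triangle inequality together with the defining properties of a $(\sigma, I)$-sparse partition.
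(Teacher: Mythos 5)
Your proof is correct and follows essentially the same route as the paper's: bound $|P_i(v)|$ by $I$ via sparsity, use the triangle inequality through a witness node $u_X \in X \cap N_{G,v}(r_i)$ together with the $\sigma r_i$ diameter bound to get $d_G(v, l(X)) \leq (1+\sigma) r_i$, and sum over the at most $I$ clusters. The extra remark on the weak-diameter case is sound and not in tension with the paper's argument, which already relies only on the $G$-distance definition of diameter.
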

    
    \begin{proof}
        Each leader in $P_i(u)$ needs to be contacted only once, even if several nodes in the cluster are at a distance at most $r_i$ from $u$. Let $X$ be a cluster in $P_i(u)$. Then there exists a node $x\in X$ with $d(u, x)\leq r_i$. This implies that $d(u, l(X)) \leq d(u, x) + d(x, l(X))\leq r_i + \sigma r_i$. By construction, $|P_i(u)|\leq I$. Therefore, the distance to contact all leader nodes in $P_i(u)$ is at most $I(1+\sigma)r_i$. \qedhere
    \end{proof}
    
    A move operation issued by node $v$ will modify the directory path to point to $v$, denoting the new ownership at $v$. Similar to the lookup, a move operation first discovers a leader node of the directory path by exploring the leaders of $P_i(v)$ at each level $i$, starting at $i = 0$ and going upwards until possibly $i = h$. Let $\phi_j$, $j \geq 0$, be the first leader of $\phi$ that the move operation discovers. The formation of the new directory path is done while searching for the directory path. Node $v$ first adds $l_{-1}(v)$ to the directory path. For levels $0\leq i <j$ node $v$ first searches its $P_i(v)$ neighborhood, 
    and adds $l_i(v)$ to the director path, when it does not find $\phi$. At level $j$ node $v$ will contact leader $\phi_j$, which will remain then in the directory path but a new pointer will be established pointing from $\phi_j$ to $\phi'_{j-1}$. Hence, the new directory path is $\phi_h\cdots\phi_j \phi'_{j-1} \cdots \phi'_{-1}$.

    Once the move operation reaches $\phi_j$, it follows downward the old directory path toward the old owner $\phi_{-1}$ which it informs that the new token owner is $v$. While going downward the move operation deletes the leaders from the old directory path, that is, $\phi_{j-1} \cdots \phi_{-1}$ are removed from $\phi$.
    
    Multiple nodes may issue move operations concurrently, resulting in one or more newly formed sub-paths which will merge with the directory path. However, only the latest version of the directory path will include the root node. To guarantee that at any moment there is a unique complete directory path (without splits or gaps) from the root to an owner node, we make sure that in the upward phase of the move operation the exploration and the updates happen atomically. When a move operation contacts $\phi_i$ to query it about the directory path, then the message will cause $\phi_i$ to immediately update its downward pointer to $\phi'_{i-1}$. Subsequent lookup and move operations that reach $\phi_i$ will thus be directed to the new directory path. As the directory path $\phi$ develops from sub-paths that were initiated by different source nodes, the length of the directory path can increase compared to the initial directory path.
    
    \begin{lemma}\label{lem: distance consecutive nodes}
        The distance between two consecutive nodes $\phi_i$ and $\phi_{i+1}$ on the directory path is at most $d(\phi_i, \phi_{i+1}) \leq \sigma(r_i +r_{i+1}) + r_{i+1}$, for $-1\leq i < h$.
    \end{lemma}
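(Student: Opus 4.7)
The plan is an induction on the sequence of operations (publish followed by moves) that have shaped $\phi$, together with a case analysis on how the pair $(\phi_i, \phi_{i+1})$ came to sit on the path. Every such pair is of one of two types: a \emph{coincident} pair, whose two endpoints were installed in the upward phase of the same operation (the publish by the original owner $u$, or a move by some node $w$ whose search had not yet met the old path by level $i+1$); and a \emph{merge} pair, at the unique level $j=i+1$ where a move issued by some node $v$ first contacted the existing directory path.

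For a coincident pair I would set $\phi_i = l_i(w)$ and $\phi_{i+1} = l_{i+1}(w)$, and bound via the triangle inequality using the cluster diameter bound $\sigma r_k$:
\[
d(\phi_i, \phi_{i+1}) \leq d(\phi_i, w) + d(w, \phi_{i+1}) \leq \sigma r_i + \sigma r_{i+1},
\]
which is already strictly below the claimed bound, so no $+r_{i+1}$ slack is required in this case.

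For a merge pair, $\phi_i = l_i(v)$ was just inserted, giving $d(v, \phi_i) \leq \sigma r_i$. The pre-existing $\phi_{i+1}$ was discovered by $v$'s exploration of $P_{i+1}(v)$, so by definition of this set there is a node $x$ in the cluster led by $\phi_{i+1}$ with $d(v, x) \leq r_{i+1}$, and the cluster diameter gives $d(x, \phi_{i+1}) \leq \sigma r_{i+1}$. Chaining with the triangle inequality yields
\[
d(\phi_i, \phi_{i+1}) \leq \sigma r_i + r_{i+1} + \sigma r_{i+1} = \sigma(r_i + r_{i+1}) + r_{i+1},
\]
which is exactly the bound stated in the lemma; the extra additive $r_{i+1}$ is precisely the exploration radius used to find $\phi_{i+1}$.

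The last step is to check that these two types exhaust the possibilities throughout the life of $\phi$: after a publish every pair is coincident; after a move merging at level $j$, pairs at levels $i < j-1$ are new coincident pairs generated by $v$, the single pair at $i = j-1$ is a merge pair, and pairs at $i \geq j$ are unchanged and inherit the invariant by the inductive hypothesis. The only mild subtlety I anticipate is concurrency: if multiple moves interleave, one must argue that each merge still fits this framework. The atomicity of $\phi_j$'s downward-pointer update promised by the protocol makes each merge analyzable in isolation, so this is a bookkeeping check rather than the real content; the distance arithmetic in the merge case is the essential step.
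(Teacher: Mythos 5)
Your proposal is correct and matches the paper's proof: the paper performs exactly the same two-case analysis (both $\phi_i,\phi_{i+1}$ added by the same node, versus $\phi_i$ added and $\phi_{i+1}$ discovered at level $i+1$), with the same triangle-inequality arithmetic yielding $\sigma(r_i+r_{i+1})$ and $\sigma(r_i+r_{i+1})+r_{i+1}$ respectively. Your added induction scaffolding to show the two cases are exhaustive is a slightly more explicit packaging of what the paper leaves implicit, but the substance is identical.
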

    
    \begin{proof}
        Nodes $\phi_i$ and $\phi_{i+1}$ were either added to the directory path by the same node $v$, in a publish or move operation, or $v$ added $\phi_i$ and then found $\phi_{i+1}$ during its search for the directory path at level $i+1$. In the first case, $\phi_i=l_i(v)$ and $\phi_{i+1}=l_{i+1}(v)$. Therefore, their distance is bounded by 
            $$d(\phi_i, \phi_{i+1})\leq d(\phi_i, v) + d(v, \phi_{i+1}) \leq \sigma (r_i + r_{i+1}).$$ 
        In the second case, node $v$ found $\phi_{i+1}$, because it is the leader of a cluster that contains a node $w$, such that $d(v, w)\leq r_{i+1}$. Therefore, in this case we have
            \begin{equation*}
                d(\phi_i, \phi_{i+1})\leq d(\phi_i, v) + d(v, w) + d(w, \phi_{i+1})\leq \sigma(r_i +r_{i+1}) + r_{i+1}. \qedhere
            \end{equation*}
    \end{proof}
    The total length of the directory path is simply the sum of distances between consecutive nodes. 
    
    \begin{lemma}\label{lem: Length directory path}
        The directory path from level $-1$ to level $i$ has length $\mathcal{O}(\sigma \rho^{i})$.
    \end{lemma}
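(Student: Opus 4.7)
The plan is to bound the total length by summing the per-level bound from Lemma~\ref{lem: distance consecutive nodes} and then recognizing the result as a geometric series in $\rho$.

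First I would write the length of the directory path from level $-1$ to level $i$ as the telescoping sum
\[
\mathrm{length}(\phi_{-1 \to i}) \;=\; \sum_{j=-1}^{i-1} d(\phi_j, \phi_{j+1}),
\]
which follows because consecutive directory leaders are connected by pointers and messages travel along shortest paths. Then I would apply Lemma~\ref{lem: distance consecutive nodes} termwise, giving
\[
\mathrm{length}(\phi_{-1 \to i}) \;\leq\; \sum_{j=-1}^{i-1} \bigl[\sigma(r_j + r_{j+1}) + r_{j+1}\bigr].
\]

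Next, I would use the definition $r_j = \min\{D, \rho^j\} \leq \rho^j$ to replace each $r_j$ by $\rho^j$, obtaining an upper bound of the form
\[
\sum_{j=-1}^{i-1} \bigl[\sigma \rho^j + (\sigma+1)\rho^{j+1}\bigr]
\;\leq\; (2\sigma + 1)\sum_{j=-1}^{i} \rho^{j}.
\]
The right-hand side is a finite geometric series with ratio $\rho > 1$, so it evaluates to $(2\sigma+1)\cdot \frac{\rho^{i+1} - \rho^{-1}}{\rho - 1}$, which is $\mathcal{O}(\sigma \rho^i)$ (treating $\rho$ as a constant, as the paper does).

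There is essentially no obstacle here; the lemma is a routine consequence of the previous one combined with the geometric growth of the $r_j$. The only minor care needed is at the bottom of the sum (the $j=-1$ term, where $r_{-1}$ corresponds to singleton clusters with diameter $0$, so the bound still holds), and in noting that the ``dominant'' term $\rho^i$ at the top of the sum absorbs all lower-order contributions, which is exactly why the bound depends only on the top level and not on $h$.
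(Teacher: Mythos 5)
Your proof is correct and matches the paper's argument: both sum the per-level bound from Lemma~\ref{lem: distance consecutive nodes}, replace $r_j$ by $\rho^j$, and evaluate the resulting geometric series to get $\mathcal{O}(\sigma\rho^i)$.
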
 
    
    \begin{proof}
        In theory, case two above could apply for every pair of consecutive nodes on $\phi$. Therefore,
        \begin{align*}
            \text{length }(\phi_{-1},\cdots, \phi_i) \leq \sum_{j=-1}^{i-1}d(\phi_j, \phi_{j+1})
            \leq \sum_{j=-1}^{i-1} \sigma(r_j +r_{j+1}) + r_{j+1}
            \leq \frac{(\rho^{i+1} -1)(\sigma\rho + \rho + \sigma)}{(\rho -1)\rho}.&\qedhere
        \end{align*}
    \end{proof}
    
    The directory path to node $w$ may not be comprised completely by leaders in clusters that contain $w$. Hence, a lookup operation issued by $u$ might not discover the directory path at the lowest level where $u$ contacts the leader of $w$. To alleviate this problem, a leader node $l_i(w)$ that is added to $\phi$ informs leader node $l_{i'}(w)$, where $i' = i +  \log_{\rho} (c' \sigma)$, for an appropriately constant $c'$. We call $l_{i'}(w)$ the special parent of $w$ with respect to level $i$. When node $u$ searches for the directory path, it asks the leader nodes if they are part of the directory path, or if they are the special parent of a node on the directory path. If it finds a node that is a special parent of a node on the directory path, it takes the link to the node on the directory path and continues to follow the directory path downwards from there. In the downward phase of a move the information of the respective special parents about the previous source is also deleted. It can happen that a move removes a node from the directory path at level $i$, while a lookup follows a link from a special parent to that node. In this case, the lookup will take the link back to level $i'$ and continues its search there. 
    

\section{Responding to Edge Failures}\label{sec: Responding to Edge Failures}
    In case of edge failures, our clustering is not guaranteed to satisfy the properties of a sparse partition. Additionally, some of the shortest trees that the nodes store could become disconnected. Hence we need to repair our data structures, to guarantee correctness and performance of our algorithm. 
    
    Our repair mechanism consists of several parts: If the failed edge $e$ lies on the spanning tree of cluster $X$, we split $X$ into two clusters. Second, we update all the shortest path trees that contain $e$. If needed, we then update the directory path. Finally, we ensure that no message is lost. 
    
    To recluster a single cluster at level $i$, we need $\mathcal{O}(n^2)$ messages, each of size $\mathcal{O}(\log(n))$ and cost $\sigma r_i$. In a weak sparse partition, one additional message of size $\mathcal{O}(n\log(n))$ and cost $\sigma r_i$ is needed. To update the directory path at level $i$, we need $\mathcal{O}(1)$ additional messages of size $\mathcal{O}(\log(n))$, that have a cost of up to $D'$, where $D'$ denotes the diameter of the graph remaining after the edge failure. Updating all shortest path trees requires up to $\mathcal{O}(n^2)$ messages of size $\mathcal{O}(\log(n))$ and cost $\mathcal{O}(D')$. A detailed cost analysis can be found in Section \ref{sec: cost fault mechanism}.
    
    We modify the operations described in Section \ref{sec: Directory Scheme} as follows:
    \begin{itemize}\setlength\itemsep{0em}
        \item Each node on $\phi$ knows which node $w$ added it to $\phi$ during a publish or move initiated by $w$.
        \item When $w$ contacts a leader node $l(X)$ during the search for the directory path in a lookup, or move operation, then the message includes a list of all nodes from $w$'s $r_i$-neighborhood that $w$ believes to be part of cluster $X$.
        \item Node $w$ contacts a level $i$ leader node $l(X)$ only if  $d(w, l(X))\leq r_i + 2\sigma r_i$.
    \end{itemize}

    \subsection{Updating Shortest Path Trees}\label{subsec: upadating shortest path trees}
        In our protocol, every node stores a shortest path tree. When an edge $e=\{u, v\}$ fails, we need to update all the ones that contain edge $e$. To update a shortest path tree we use the fully dynamic algorithm for maintaining a shortest path tree in the presence of edge failure developed by King~\cite{king1999fully}. King's algorithm mimics Dijkstra to update the tree. To update all shortest path trees affected by the edge failure, we simply call the algorithm for every tree that needs to update. 

        In the process of updating its shortest path tree, node $w$ determines for which nodes the distance from $w$ increased, and updates the information regarding which nodes to contact at which level accordingly. Once $w$ updated its shortest path tree, it informs all nodes about the changes in its shortest path tree. That is, it sends a message to the endpoints of those edges that were added or removed from the shortest path tree.

        To initialize the update of the shortest path tree, we recall that every node knows, for each of its incident edges, which shortest path trees the edge belongs to. When edge $e=\{u, v\}$ fails, both $u$ and $v$ detect the failures. For each shortest path tree that contains $e$ either the path to $u$ or to $v$ is not affected by the edge failure. Hence, this node can inform the root node to initialize an update. 

    \subsection{Reclustering}
        Suppose that edge $e=\{u, v\}$ fails. For the reclustering, we distinguish between the root level and clusters below the root level. Consider first, the clusters that are not at the root level. For a cluster $X$, whose spanning tree $T_X$ contains edge $e$, we split $X$ into two clusters defined by the connected components of $T_X\setminus\{e\}$, i.e., $X_1= X\setminus (X\cap T_{\setminus v}(X))$ and $X_2 = X\cap T_{\setminus v}(X)$. Here, we assume $d(u, l(X))< d(v, l(X))$ on $T(X)$. We define $l(X_1) = l(X)$. If we use a strong sparse partition, then $l(X_2) = v$. Otherwise, $v$ chooses a node $w$ closest to it on $T(X)$ from $X\cap T_{\setminus v}(X)$ to become $l(X_2)$. The tree $T(X)\setminus T_{\setminus v}(X)$ forms our spanning tree of $X_1$, and the tree $T_{\setminus v}(X)$ rerooted at $l(X_2)$ forms our spanning tree for $X_2$. Clusters in $P_{-1}$ consist of single nodes, thus, they cannot split. The cluster at level $h$ consists of the entire graph. Even if edge failures occur, we do not split this cluster, but simply allow for the diameter of the cluster to grow.
        
        The following two lemmas bound the diameter and the number of the generated clusters.
        
        \begin{lemma}\label{lem: change diameter cluster}
            Let $X_i$ be any cluster at level $-1\leq i < h$. Regardless of the number of edge failures that have occurred, the diameter of $X$ is $\diam(X_i)\leq 2\sigma r_i$.
        \end{lemma}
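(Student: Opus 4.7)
The plan is to prove the lemma by induction on the number of edge failures, maintaining the stronger structural invariant that for every cluster $X$ at level $-1 \le i < h$ there is a distinguished node $c_X$ on the spanning tree $T(X)$ (a ``center,'' not necessarily equal to the leader) such that $d_{T(X)}(c_X, u) \le \sigma r_i$ for every $u \in X$. This invariant directly yields the lemma, since for any two $u_1, u_2 \in X$ the triangle inequality on the tree gives $d_G(u_1, u_2) \le d_{T(X)}(u_1, c_X) + d_{T(X)}(c_X, u_2) \le 2\sigma r_i$, and this bound applies in both the strong and weak diameter senses.

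For the base case (no failures), I set $c_X := l(X)$. The initial $T(X)$ is a shortest-path tree rooted at $l(X)$, which by the guarantee stated in Section~\ref{sec:model} places every $u \in X$ within tree distance $\diam(X) \le \sigma r_i$ of $l(X)$, so the invariant holds.

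For the inductive step, let edge $e = \{u, v\}$ fail with $u$ the parent of $v$ in $T(X)$. If $e \notin T(X)$, nothing changes. Otherwise, following the paper's construction, $X$ splits into $X_1$ (the $l(X)$-side) and $X_2$ (the $v$-side), with spanning trees $T(X_1) = T(X) \setminus T_{\setminus v}(X)$ and $T(X_2) = T_{\setminus v}(X)$. I split into cases by which side of $e$ the inherited center $c_X$ lies on. If $c_X$ is on the $X_1$-side, I set $c_{X_1} := c_X$; the paths on $T(X)$ from $c_X$ to nodes of $X_1$ do not cross $e$, so their lengths are unchanged and still bounded by $\sigma r_i$. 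For $X_2$, I set $c_{X_2} := v$ and observe that any path in $T(X)$ from $c_X$ to a node $u'' \in X_2$ must cross $e$, hence decomposes as $c_X \to \cdots \to u \to v \to \cdots \to u''$; the suffix from $v$ to $u''$, which coincides with the corresponding path on $T(X_2)$, is at most $\sigma r_i - d_{T(X)}(c_X, v) \le \sigma r_i$. The symmetric case where $c_X$ lies on the $X_2$-side is handled by taking $c_{X_1} := u$ and $c_{X_2} := c_X$, by the mirror argument.

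The main obstacle is the weak-partition case, where the new leader $l(X_2) = w$ is merely the closest $X$-node to $v$ on the tree and can be several hops away from $v$ (and may not even belong to $X$). A naive invariant of the form ``tree depth from the leader is $\le \sigma r_i$'' breaks here, because rerooting $T_{\setminus v}(X)$ at $w$ can stretch depths beyond $\sigma r_i$. Phrasing the invariant in terms of an arbitrary center, fixed to $v$ after a split and decoupled from the leader assignment, sidesteps this issue since rerooting affects neither the underlying tree nor its pairwise distances; only the designated root changes.
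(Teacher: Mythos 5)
Your proof is correct. It reaches the same underlying fact as the paper's argument --- that the new cluster $X_i$ has a single ``hub'' node on its spanning tree through which every tree path exits toward the original root $l(X)$, and every node of $X_i$ is within tree distance $\sigma r_i$ of that hub, giving the $2\sigma r_i$ bound by the triangle inequality --- but you get there differently. The paper argues directly on the post-failure situation: it takes $u_i$ to be ``the node in $X_i$ that was closest to $l(X)$ on $T(X)$'' and notes that $u_i$ lies on every $T(X)$-path from a node of $X_i$ to $l(X)$. You instead run an induction over failures, maintaining a ``center'' $c_X$ with the invariant $d_{T(X)}(c_X,\cdot)\le\sigma r_i$, and update $c_{X_1},c_{X_2}$ to the appropriate endpoint of the failed edge on each side of the split.

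Your version is somewhat longer but is actually more careful on one point. The paper's choice of $u_i$ as a \emph{node in $X_i$} is only guaranteed to be the branching hub in a strong partition; in a weak partition the topmost tree node of $T(X_i)$ may lie outside $X$, in which case the $X_i$-node nearest $l(X)$ need not be on the tree path from every other $a\in X_i$ to $l(X)$ (picture the topmost node $t\notin X$ with two children, each of which has $X$-nodes below). Your formulation explicitly allows the center to be \emph{any} node on $T(X)$, not necessarily in $X$, and you choose it to be $v$ (or $u$), which \emph{is} the topmost node of the subtree $T_{\setminus v}(X)$. That sidesteps the ambiguity. One small slip in your writeup: the parenthetical ``(and may not even belong to $X$)'' about $w=l(X_2)$ is wrong as written, since by the paper's construction $w$ is chosen from $X\cap T_{\setminus v}(X)$; presumably you meant $v$ might not belong to $X$, which is indeed the motivation for decoupling the center from the leader and is handled correctly by the rest of your argument.
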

        
         \begin{proof}
            If $X_i$ is one of the initial clusters, then the lemma holds trivially. Suppose $X_i$ was generated through the splitting of the initial cluster $X$. We know that the path from any node on $T(X)$ to $l(X)$ on $T(X)$ was at most $\sigma r_i$. Let $u_i$ be the node in $X_i$ that was closest to $l(X)$ on $T(X)$. For any two nodes $a$ and $b$ in $X_i$ we have $d(a, b)\leq d(a, u_i) + d(u_i, b)\leq 2 \sigma r_i$, where the last inequality follows from the fact that $u_i$ must be on the path from $a$, respectively $b$ to $l(X)$ on $T(X)$ and that no edge on the path from $a$ to $u_i$ and the path from $u_i$ to $b$ on $T(X)$ failed.
        \end{proof}
        
        \begin{lemma}\label{lem: linear increase cluster number}
            Let $X$ be a cluster and suppose that exactly $f$ edges fail. Then $X$ will split into at most clusters $X_1, \dots, X_l$ where $l\leq f+1$.
        \end{lemma}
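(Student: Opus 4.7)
I would prove this by tracking what the successive splits do to the original spanning tree $T(X)$ rather than reasoning cluster-by-cluster. The key observation is that the splitting rule described just before the lemma is essentially a tree-cut: when cluster $X$ (or any descendant sub-cluster) splits because a tree edge $e = \{u,v\}$ fails, the two resulting spanning trees are precisely the two connected components of $T(X) \setminus \{e\}$ (one kept as is, the other rerooted). So the forest formed by the spanning trees of the current sub-clusters of $X$ is always exactly $T(X)$ with some subset of its edges deleted.

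The plan is to make this precise by induction on the number of edge failures $k = 0, 1, \dots, f$ that have been processed. As invariant I would carry: after $k$ failures, the current sub-clusters of $X$ partition (the nodes of) $T(X)$ and their spanning trees are exactly the connected components of the subgraph of $T(X)$ obtained by deleting those failed edges that lie in $T(X)$. The base case $k=0$ is immediate: there is one sub-cluster, namely $X$ itself, with spanning tree $T(X)$. For the inductive step, the $(k{+}1)$-th failed edge $e$ is either (i) not an edge of $T(X)$, in which case by the splitting rule no sub-cluster of $X$ is split and the invariant is preserved, or (ii) an edge of $T(X)$, in which case $e$ lies in exactly one of the current sub-cluster spanning trees (since they are vertex-disjoint subtrees covering $T(X)$), and the construction in the paragraph preceding the lemma splits precisely that sub-cluster into the two components of its tree minus $e$. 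Either way the forest-deletion description is maintained.

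With the invariant in hand, the conclusion is a one-line tree fact: deleting at most $f$ edges from the tree $T(X)$ yields at most $f+1$ connected components, hence at most $f+1$ sub-clusters. This bound applies uniformly to both strong and weak sparse partitions, because the splitting rule uses the (tree) structure of $T(X)$ in exactly the same way; the only difference in the weak case is that $T(X)$ may contain vertices outside $X$, but the counting of components of $T(X)$ minus a set of edges is unaffected by this.

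The only mildly delicate point, and what I would flag as the main thing to verify carefully, is case (ii) of the inductive step: one must check that when a failed edge $e$ does lie in $T(X)$, the paper's splitting rule is actually triggered for the unique current sub-cluster $X'$ whose spanning tree contains $e$, and that the two new trees produced by the rule coincide with the two components of $T(X') \setminus \{e\}$. This follows directly from the construction (the new trees are defined as $T(X') \setminus T_{\setminus v}(X')$ and $T_{\setminus v}(X')$), but it is the step where the weak/strong distinction could superficially worry the reader and so deserves an explicit sentence.
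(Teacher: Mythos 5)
Your proof is correct and rests on exactly the same two facts as the paper's: that the final sub-clusters of $X$ are the connected components of $T(X)\setminus F$, and that removing at most $f$ edges from a tree leaves at most $f+1$ components. The only difference is that the paper asserts the first fact without argument, whereas you supply the inductive justification (tracking the invariant that the sub-cluster spanning trees are always the components of $T(X)$ with the so-far-failed tree edges deleted), which is a more careful write-up of the same idea rather than a different route.
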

        
        \begin{proof}
            Let $F$ be the set of edges that fail. The final clusters correspond to the connected components of $T(X)\setminus F$. Removing $j$ edges from a tree, leaves $j+1$ connected components. The set $T(X)\cap F \subseteq F$, hence $|T(X)\cap F|\leq f$ and $T(X)\setminus F$ consists of at most $f+1$ connected components.
        \end{proof}

        The initialization of the reclustering is identical to the initialization to update the shortest path tree. Say, node $u$ sends a message to $l(X)$ along $T(X)$ informing it of the failure. Each intermediate node $w$ updates its knowledge of $T_{\setminus w}(X)$ to exclude $T_{\setminus v}(X)$. If node $v$ is not in $X$ (that is, in case of weak diameter of $X$), then it chooses a node $w$ closest to it on $T(X)$ from $X\cap T_{\setminus v}(X)$ to become the new leader node and sends a message to $w$ to inform it of the reclustering and its new leadership role. Every node $x$ on the path from $v$ to $w$ includes its information about $T_{\setminus x}(X)$, so that $w$ and every node on the path from $v$ to $w$ knows its subtree in $T_{\setminus v}(X)$ rerooted at $w$.

        Next, we consider the root level. The spanning tree $T_X$ of this cluster corresponds to the shortest path tree rooted at $r$. This implies that before edge failure, $d(r, u)\leq \sigma\rho^h$ on $T_X$ for all nodes $u\in V$. If edge $e$ does not lie on $T_X$, then the diameter of the graph is at most $2\sigma\rho^h$. Therefore, in this case, we do not modify the root level. 
        
        If $e$ is on $T_X$, then we need to determine if there is a node whose distance to $r$ is larger than $\mathcal{O}(\sigma\rho^h)$, which would require an increase in the number of hierarchy layers. Define $V_1$ and $V_2$ as the partition of the vertices of $G$ defined by the connected components of $T_X\setminus \{e\}$, and assume $r\in V_1$. According to Lemma~\ref{lem: change diameter cluster} the diameter of $G[V_1]$ and $G[V_2]$ is at most $2\sigma\rho^h$. Let $x$ be the node with the maximum distance to $r$. If $x\in V_1$, $d(x,r)\leq 2\sigma \rho^h$, and all nodes are within distance $2\sigma\rho^h$ $r$. If $x\in V_2$, then the shortest path from $x$ to $r$ contains at least one edge that has an end point in $V_1$ and another in $V_2$. Let $e^*=\{a, b\}$ be the heaviest edge among these edges. Thus, $d(x,r) = d(x, a) + w(e^*) + d(b,r)$. Thus, we can determine whether we need to add additional layers to the hierarchy based on the weight of $w(e^*)$. In particular, if $w(e^*)> \sigma\rho^{h+1}-4\sigma\rho^h$, then we increase the number of layers to $h'$, where $h'$ is the smallest integer such that $\sigma\rho^{h'}> d(x,r)$. Note that this choice of $h'$ ensures that for all layers $i$ such that $-1\leq i \leq h'$ the diameter of a cluster at level $i$ is at most $2\sigma\rho^i$. 
        
        If $h'>h$, then the layers $h, h+1, \dots, h'-1$ consist of 2 clusters, namely $V_1$ and $V_2$. The splitting on level $h$ is performed in the same way that we split a cluster at a lower level whose spanning tree contains edge $e$. Also, we choose $\phi_h$ the same way that we detect the correct leader at the levels below (see Subsection \ref{subsec: updating directory path}). Levels $h+1,\dots, h'-1$ are simply copies of level $h$, and the directory path goes through the copy of the cluster that is part of the directory path at level $h$. When a cluster below the root level splits, then only the leader of $X_2$ sends a message to all nodes in $X_2$ to inform them of the cluster update. If the root level changes, then we require both $l(X_1)$ and $l(X_2)$ to send a message to all nodes in their cluster to inform them of the number of additional layers. This message will also cause the directory path nodes at levels $h-(i + \log_\rho(c'\sigma))+1, \dots, h' - i + \log_\rho(c'\sigma)$ to send a message to their special parent, so that this information is also extended to the additional layers. Level $h'$ consists of the entire graph and the leader is $r$. The spanning tree used for level $h'$ corresponds to the updated shortest path tree of node $r$. 
   
    \subsection{Updating the Directory Path and Special Parents}\label{subsec: updating directory path}
        We also need to ensure that the directory path and the special parent information are maintained. Suppose that cluster $X$ at level $i$ splits due to an edge failure. If $l(X)$ is not on the directory path, then neither $l(X_1)$ nor $l(X_2)$ will be. If $l(X)$ is on the directory path, then the one that contains the node $w$ that added $l(X)$ to $\phi$ will be. When $l(X)$ is informed about the edge failure, it can determine whether $l(X_2)$ is part of the directory path, because it remembers the node that added it to $\phi$, and it knows which nodes remain in $X_1$.
        
        Because $l(X_2)$ has no information about $\phi$ before the edge failure, node $l(X_1)$ sends a message to $v$, informing it whether $l(X_2)$ is part of the directory path or not. In case we are using a weak-sparse partition, node $v$ forwards the message to $l(X_2)$. If $l(X_2)$ is on the directory path, then the message contains the id of the directory path nodes that are on the level above and below, as well as the id of the node that added $l(X)$ to the directory path. Node $l(X_2)$ then sets its pointers to $\phi_{i+1}$ and $\phi_{i-1}$ and sends them a message so they update their pointers also. When $\phi_{i-1}$ and $\phi_{i+1}$ change their pointers to $l(X_2)$, they send a message to $l(X)$, so that it removes its outdated directory path pointers. If $l(X_2)$ is not part of the directory path, then the message $l(X)$ sends to $v$ simply informs $l(X_2)$ that it is not part of the directory path. The process of updating the directory path is displayed in Figure \ref{fig: updating the directory path}. After $l(X_2)$ has received $l(X)$'s message, and, if needed, updated the directory path, it broadcast a message to all nodes in $X_2$. The broadcast message informs the nodes in $X_2$ that their level $i$ leader has changed. Each node in $X_2$ will forward this information to all nodes in their $r_i$-neighborhood, so they can update their preprocessing information. 
        
        \begin{figure}[ht]
            \centering
            \includegraphics[height=3cm]{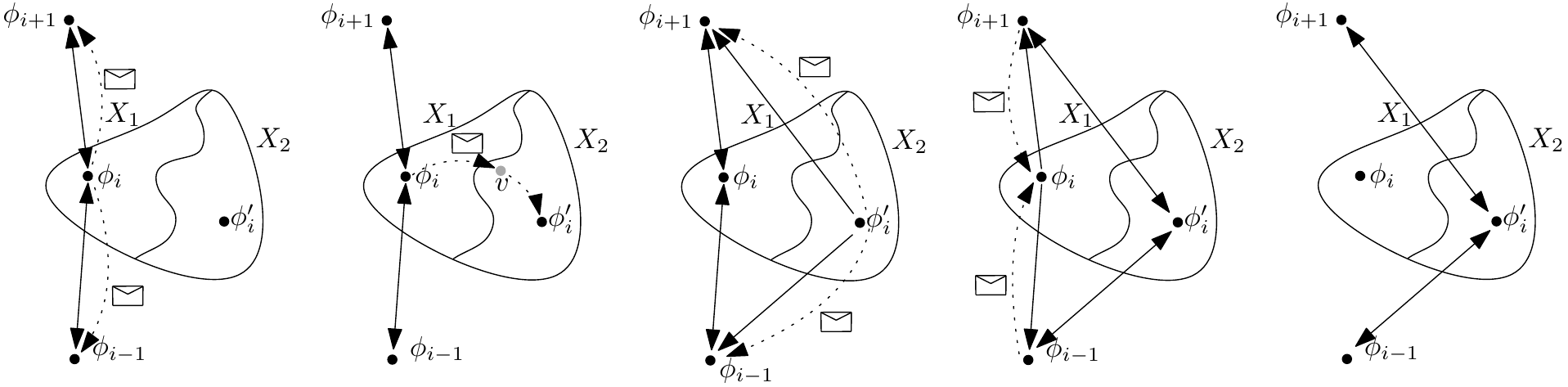}
            \caption{The steps of updating the directory path at level $i$: \\
            1) An edge failure causes the need to be updated from $\phi_i$ to $\phi_i'$. Node $\phi_i$ sends a message to $\phi_{i-1}$ and $\phi_{i+1}$ about the update that is about to happen.\\
            2) Node $\phi_i$ sends a message to $\phi_i'$ to inform it about the directory path. If we are using a weak sparse partition, node $v$ acts as an intermediate in the message transfer.\\
            3) $\phi_i'$ sets pointers to $\phi_{i-1}$ and $\phi_{i+1}$ and sends them a message so they update their pointers too.\\
            4) $\phi_{i-1}$ and $\phi_{i+1}$ change their pointer to $\phi_i'$ and send a message to $\phi_i$ to remove its pointers.\\
            5) $\phi_i$ removes its pointers to $\phi_{i+1}$ and $\phi_{i-1}$.}
            \label{fig: updating the directory path}
        \end{figure}
        
        For simplicity, we do not want two consecutive nodes on the directory path to be updated concurrently (when the modification is due to an edge failure). Therefore, if $\phi_i= l(X)$ needs to be replaced by $l(X_2)$, then $\phi_i$ will contact $\phi_{i-1}$ and $\phi_{i+1}$ to inform them of the update, before messaging $l(X_2)$. Neither of them will be able to initialize an update on their level until the update at level $i$ is complete. In case two subsequent nodes $\phi_i$ and $\phi_{i+1}$ attempt to initialize an update of the directory path simultaneously, (that is, $\phi_i$ sends a message to $\phi_{i+1}$ to initialize an update, but before $\phi_{i+1}$ receives this message it sends a message to $\phi_i$ to also initialize the an update of the directory path) then the node with the lower id will be allowed to do its update first.
        
        When we modify the directory path, we also need to update the special parent information of $l(X_1)$ and $l(X_2)$. This is done at the same time as we modify the directory path. When $l(X_1)$ detects that it is no longer part of $\phi$ it sends a message to its special parent at level $i'$ instructing it to remove the link to $l(X_1)$, and when $l(X_2)$ is informed that it is part of $\phi$ it also informs its special parent at level $i'$.
        
    \subsection{Message Fault Resilience}\label{sec: message fault resilience}
        Whenever an edge $e=\{u, v\}$ fails, any message in transit on $e$ is lost. To restore such messages, the two endpoints communicate to compare their lists of messages sent and received along the failed edge. Any message that does not appear on both lists will be resent.
    
        The mechanisms described above allow us to update the data structures during the execution of the distributed directory. For a detailed discussion on how the mechanisms affect concurrent publish, lookup and move operations and how we can ensure that each operation is still executed correctly, we refer the reader to Section \ref{app: Dynamic Setting}. A discussion on how to handle concurrent edge failures can be found in Section \ref{app: concurrent edge failures}.


\section{Analysis of Algorithm} \label{sec:analysis}
    The analysis of our algorithm in the absence of failure is similar to the analysis of the Spiral~\cite{Sharma2014distributed} and can be found in Section \ref{sec: Analysis without faults}.

    \subsection{With up to f Faults}\label{subsec: Analysis with Faults}
        Operations that occur concurrently to, or immediately after an edge failure can endure additional costs because the special parent information got updated too slow, or because the directory path got stretched. We call such operations \emph{transient operations}. A discussion of transient operations can be found in Section \ref{app: transient operations}. Operations that are not transient are called \emph{normal operations}.  
        
        \subsubsection{Normal Operations}
            For two consecutive nodes that are added to the directory path by a move operation after the last edge failure, one of the two cases discussed in Lemma \ref{lem: distance consecutive nodes} must apply. Because the diameter of a clusters can increase due to the edge failure (Lemma \ref{lem: change diameter cluster}), the length of the directory path after $f$ edge failures can be longer than before any faults occurred.
            
            The next lemma helps to bound also the cost of a publish operation after failures.
            
            \begin{lemma}\label{lem: length directory path after failure}
                Suppose that since the last edge failure, a publish or move operation has rebuilt the directory path up to level $i$. Then the length of the directory path up to level $i$ is at most $\mathcal{O}(\rho^i\sigma)$.
            \end{lemma}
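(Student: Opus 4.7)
The plan is to replay the proof of Lemma \ref{lem: Length directory path}, but with the post-failure diameter bound from Lemma \ref{lem: change diameter cluster} substituted in place of the pristine bound $\sigma r_i$ that underlies Lemma \ref{lem: distance consecutive nodes}. Since the hypothesis states that the directory path up to level $i$ has been rebuilt by a publish or move after the last edge failure, every consecutive pair $\phi_j, \phi_{j+1}$ with $-1 \le j < i$ was produced by one of the two mechanisms identified in the proof of Lemma \ref{lem: distance consecutive nodes}: either both leaders were added by the same source node $v$, or the source node $v$ added $\phi_j$ and then discovered $\phi_{j+1}$ while searching its $r_{j+1}$-neighborhood at level $j+1$.

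The first step is to redo the bookkeeping of Lemma \ref{lem: distance consecutive nodes} using $\diam(X) \leq 2\sigma r_j$ for a level-$j$ cluster (Lemma \ref{lem: change diameter cluster}). In the first case this yields
\begin{equation*}
    d(\phi_j, \phi_{j+1}) \leq d(\phi_j, v) + d(v, \phi_{j+1}) \leq 2\sigma r_j + 2\sigma r_{j+1},
\end{equation*}
and in the second case, letting $w$ be a node in $C_{j+1}(\phi_{j+1})$ with $d(v,w) \leq r_{j+1}$,
\begin{equation*}
    d(\phi_j, \phi_{j+1}) \leq d(\phi_j, v) + d(v, w) + d(w, \phi_{j+1}) \leq 2\sigma r_j + (2\sigma+1) r_{j+1}.
\end{equation*}
In either case $d(\phi_j,\phi_{j+1}) = \mathcal{O}(\sigma r_{j+1}) = \mathcal{O}(\sigma \rho^{j+1})$, using $r_{j+1} \leq \rho^{j+1}$.

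The second step is to sum these bounds over $j = -1, 0, \ldots, i-1$ and invoke the standard geometric-series estimate already used in Lemma \ref{lem: Length directory path}:
\begin{equation*}
    \text{length}(\phi_{-1}, \ldots, \phi_i) \leq \sum_{j=-1}^{i-1} \bigl(2\sigma(r_j + r_{j+1}) + r_{j+1}\bigr) = \mathcal{O}(\sigma \rho^i),
\end{equation*}
which is the claimed bound.

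The only genuinely delicate point — and the one I would be most careful to justify — is why the post-failure diameter bound applies uniformly to every edge of the path under consideration. This is precisely where the hypothesis \emph{``rebuilt up to level $i$ since the last edge failure''} is used: it guarantees that no segment of $\phi_{-1}, \ldots, \phi_i$ was inherited from a moment before the most recent reclustering, so each pair $(\phi_j, \phi_{j+1})$ was produced while the current (possibly doubled-diameter) clusters were already in force. Once this is established, the geometric sum goes through unchanged, and the constant simply degrades by a factor of $2$ relative to Lemma \ref{lem: Length directory path}.
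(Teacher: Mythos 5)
Your proof is correct and follows essentially the same route as the paper: you decompose $d(\phi_j,\phi_{j+1})$ via the two cases of Lemma~\ref{lem: distance consecutive nodes}, substitute the post-failure diameter bound $2\sigma r_j$ from Lemma~\ref{lem: change diameter cluster}, and sum the geometric series. You are slightly more careful than the paper's own write-up (which has an index typo in the summand, writing $\rho^i$ where $\rho^j$ is meant) and you make explicit the role of the ``rebuilt since the last failure'' hypothesis, which the paper leaves implicit.
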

            
            \begin{proof}
                From the discussion in Lemma \ref{lem: distance consecutive nodes}, the distance between two consecutive nodes can be bounded by $d(\phi_j, \phi_{j+1})\leq d(\phi_j, u) + d(u, w) + d(w, \phi_{j+1})$.
                Because $w$ found $\phi$ at level $j+1$, we have $d(u, w) \leq r_{j+1}$. 
                After the repair of the data structure, the diameter of cluster $X$ at level $i$ is at most $\diam(X)\leq 2\sigma r_i$. Thus, we obtain
                \begin{align*}
                    &\sum_{j=-1}^{i-1}d(\phi_j, \phi_{j+1})
                    \leq \sum_{j=-1}^{i-1} 2\sigma(\rho^i + \rho^{i+1})+ \rho^{i+1}
                    \leq \frac{(\rho^{i+1}-1)(2\sigma(\rho+1) + \rho)}{(\rho-1)\rho} = \mathcal{O}(\sigma \rho^{i}).
                \end{align*}
            \end{proof}

            Before we analyze the cost of the lookup and move operation, we bound the number of clusters affected by a single edge failure. 
    
            \begin{lemma}\label{lem: number cluster to recluster}
                The total number of clusters that need to be reclusterd due to a single edge failure is at most $h$ in a strong sparse partition, and at most $Ih$ in a weak sparse partition.
            \end{lemma}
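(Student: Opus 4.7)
The plan is to count, level by level, the clusters whose spanning tree contains the failed edge $e=\{u,v\}$, since these are exactly the clusters that need to be reclustered. Clusters at level $-1$ consist of single nodes and cannot split, and the root cluster at level $h$ is explicitly never reclustered (its diameter is merely allowed to grow, as described in the reclustering subsection). So the total count reduces to a sum over the $h$ intermediate levels $0, 1, \ldots, h-1$, and it suffices to bound the number of affected clusters at a single level $i$.

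For the strong-sparse-partition case, I would use the fact that $T(X)$ is contained in the induced subgraph $G[X]$, so every node on $T(X)$ (in particular both endpoints of any tree edge) lies in $X$. Hence if $e \in T(X)$ then $u \in X$. Since the level-$i$ partition assigns $u$ to exactly one cluster $C_i(u)$, at most one cluster per level can have $e$ on its spanning tree, and summing over the $h$ intermediate levels gives the bound $h$.

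For the weak-sparse-partition case, $T(X)$ is a shortest-path tree rooted at $l(X)$ that spans $X$ but may route through nodes outside $X$. If $e \in T(X)$, then $u$ lies on a shortest path from $l(X)$ to some node $x \in X$, hence $d(u, x) \leq d(l(X), x) \leq \sigma r_i$, so the cluster $X$ intersects a ball of radius $\sigma r_i$ around $u$. Appealing to the $(\sigma, I)$-sparse partition property, the number of level-$i$ clusters meeting this neighborhood of $u$ is at most $I$. Summing over the $h$ intermediate levels then yields the bound $Ih$.

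The main obstacle is the weak-partition count: one must translate "the failed edge lies on $T(X)$" into a neighborhood condition that the $(\sigma, I)$-partition guarantee can control. The sparse-partition property is phrased for $\rho^i$-neighborhoods, whereas the spanning tree's reach is $\sigma r_i$; the argument above absorbs this discrepancy into the factor $I$ by using the sparse-partition property on the enlarged neighborhood, which is the same convention already used elsewhere in the paper (cf.\ the discussion preceding the lemma that treats each of the $I$ clusters in a $\rho^i$-neighborhood as potentially affected).
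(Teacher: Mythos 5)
Your proof takes essentially the same approach as the paper's: for strong partitions, exploit that $T(X)\subseteq G[X]$ so $u$ lies in exactly one cluster per level, giving at most one affected cluster per level over the $h$ non-trivial levels; for weak partitions, argue that any affected cluster must lie near $u$ and invoke the sparse-partition intersection bound to get at most $I$ per level. The radius mismatch you candidly flag (the $(\sigma,I)$-property controls $r_i$-neighborhoods while the tree's reach from $u$ is only bounded by $\sigma r_i$) is present in the paper's own proof as well, which simply states the $r_i$-neighborhood property and concludes the $I$-per-level bound without bridging that gap, so your argument is neither more nor less rigorous than the paper's.
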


            \begin{proof}
                A cluster $X$ splits due to an edge failure $e=\{u, v\}$, if $T(X)$ contains $e$. In a strong sparse partition, all nodes on $T(X)$ belong to $X$. Thus, edge $e$ can only be part of $T(X)$, if $X$ contains both $u$ and $v$. Because every node appears in exactly one cluster on each level of the hierarchy, at most $h$ clusters need to be reclusters upon a single edge failure if we use a strong sparse partition.
                
                In a weak sparse partition, the spanning tree of cluster $X$ might include nodes that are not part of $X$. However, the $r_i$-neighborhood of a node $u$ intersects at most $I$ clusters, for all $i$. As any cluster whose spanning tree contains $e$ also contains $u$ and $v$, there can be at most $I$ clusters on any level that need to be reclustered due to the failure of $e$.
            \end{proof}
            
            Furthermore, additional layers may need to be added to the directory to account for the increase in the diameter caused by the edge failures. 
            
            The analysis of the lookup and move operation after edge failures have occurred is very similar to the analysis in the absence of failures. Therefore, we have moved the proofs of the following theorems to Section \ref{sec: proofs}.
            
            \begin{restatable}[]{theorem}{lookupstrong} \label{lem: lookup with faults strong}
              Suppose we are using a strong sparse partition and that $f$ edge failures have occurred. After updating our data structures, a lookup operation finds the token with cost that is $\mathcal{O}(\sigma^2 (I+f)\rho)$ factor from optimal.
            \end{restatable}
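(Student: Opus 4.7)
The plan is to mirror the non-failure lookup analysis (as in Lemma~\ref{lem: lookup path} and the special-parent argument), substituting the post-failure partition parameters. Let $j$ be the lowest level at which the search issued by $v$ discovers the directory path, either through a direct hit $\phi_j$ or through a special-parent pointer. The total lookup cost decomposes into an upward search cost over levels $0, 1, \ldots, j$ plus a downward traversal from $\phi_j$ down to $\phi_{-1}$, and I would bound each piece separately and then divide by a lower bound on $d(v, \phi_{-1})$.

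For the upward search, I would invoke Lemma~\ref{lem: change diameter cluster} to conclude that every cluster at level $i$ has diameter at most $2\sigma r_i$ after the failures, and use the strong-partition bound $|P_i(v)| \leq I+f$ that follows from the strong-partition half of Lemma~\ref{lem: number cluster to recluster} combined with the observation (already stated in the introduction) that reclustering can add at most $f$ clusters to any $r_i$-neighborhood in the strong setting. Each of the at most $I+f$ leaders queried at level $i$ sits within distance $r_i + 2\sigma r_i = \mathcal{O}(\sigma r_i)$ of $v$, so by the same reasoning as in Lemma~\ref{lem: lookup path} the round-trip cost at level $i$ is $\mathcal{O}(\sigma(I+f) r_i)$. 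Summing geometrically over $i \leq j$ gives an upward cost of $\mathcal{O}(\sigma(I+f) r_j)$. For the downward phase, Lemma~\ref{lem: length directory path after failure} bounds the directory path from level $-1$ up to level $j$ by $\mathcal{O}(\sigma r_j)$. Combined, the total lookup cost is $\mathcal{O}(\sigma(I+f) r_j)$.

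For the lower bound on the optimum, I would argue that because the search failed at level $j-1$, neither $\phi_{j-1}$ nor any special parent of a directory-path node lies in $P_{j-1}(v)$. In particular the special parent at level $j-1$ of $l_{j-1-\log_\rho(c'\sigma)}(\phi_{-1})$ is outside $P_{j-1}(v)$, so the cluster at level $j-1-\log_\rho(c'\sigma)$ containing $\phi_{-1}$ does not intersect $N_{G,v}(r_{j-1})$. This forces $d(v, \phi_{-1}) = \Omega(r_j / (\rho\sigma))$, i.e., $r_j = \mathcal{O}(\rho\sigma \cdot d(v, \phi_{-1}))$. Since $d(v, \phi_{-1})$ is a lower bound on the cost of any correct lookup, dividing the upper bound by it yields the claimed approximation factor $\mathcal{O}(\sigma^2 (I+f)\rho)$.

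The step I expect to be the main obstacle is verifying that the special-parent level shift $\log_\rho(c'\sigma)$ chosen in the fault-free setting is still adequate after clusters have doubled in diameter: one must either double the constant $c'$ or show that the original choice already absorbs the factor of $2$ in Lemma~\ref{lem: change diameter cluster}, so that the special-parent pointer from any $l_i(w)$ at level $i + \log_\rho(c'\sigma)$ still covers every post-failure cluster at level $i$ containing $w$. Once that bookkeeping is settled, the remainder is a routine geometric sum combined with the standard special-parent indirection argument.
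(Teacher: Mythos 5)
Your proposal is correct and follows essentially the same route as the paper: account for the doubled cluster diameter via Lemma~\ref{lem: change diameter cluster}, replace the per-level cluster count $I$ by $I+f$ via Lemma~\ref{lem: number cluster to recluster}, redo the geometric sum for the upward phase, bound the downward phase via Lemma~\ref{lem: length directory path after failure}, and use the special-parent indirection to lower-bound the optimum. The only cosmetic difference is orientation---you fix $j$ as the discovery level and derive a lower bound $d(v,\phi_{-1}) = \Omega(r_j/(\rho\sigma))$, while the paper fixes $i$ by $\rho^{i-1} \le d(u,v) \le \rho^i$ and shows discovery happens by level $i + \log_\rho(c'\sigma)$; these are contrapositive statements of one another. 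The bookkeeping worry you flag about the constant $c'$ is real but benign: since the post-failure diameter bound $2\sigma r_i$ only doubles the constant, choosing $c'$ with a factor-2 slack (or replacing $c'$ by $2c'$) restores the special-parent coverage, and the paper implicitly absorbs this into the $\mathcal{O}(\cdot)$ without comment.
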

            
            \begin{restatable}[]{theorem}{lookupweak}\label{lem: lookup with faults weak}
              Suppose we are using a weak sparse partition and that $f$ edge failures have occurred. After the clusters, directory path, preprocessing information, and special parents have been updated, a lookup operation finds the token with cost that is $\mathcal{O}(\sigma^2fI \rho)$ factor from optimal.
            \end{restatable}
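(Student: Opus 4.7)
The plan is to mirror the structure of the lookup analysis without failures (Lemma \ref{lem: lookup path} combined with Lemma \ref{lem: Length directory path}), but substitute the updated cluster parameters for a weak sparse partition after $f$ edge failures. By Lemma \ref{lem: change diameter cluster}, the diameter of any level-$i$ cluster after repairs is at most $2\sigma r_i$, and by Lemma \ref{lem: linear increase cluster number} together with the argument in the weak-partition case of the introduction, each of the at most $I$ original clusters touching the $r_i$-neighborhood of a node can split into at most $f+1$ pieces, so $|P_i(v)| \leq (f+1)I$ after repairs. Hence the same argument as in Lemma \ref{lem: lookup path} gives a per-level exploration cost of $\mathcal{O}(r_i \cdot \sigma \cdot (f+1) I) = \mathcal{O}(\sigma f I \, r_i)$.

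Next I would bound the level $j$ at which the lookup first discovers the directory path. Let $w$ be the current owner and $\text{OPT} \geq d(u,w)$ the optimal lookup cost for the issuer $u$. As soon as $r_j \geq d(u,w)$, the cluster $C_j(w)$ containing $w$ lies in $P_j(u)$ and its leader $l_j(w)$ is on the directory path (possibly as the updated representative produced by the repair in Subsection \ref{subsec: updating directory path}), so the search must terminate no later than this level. Consequently $r_{j-1} < \text{OPT}$, which gives $r_j \leq \rho \cdot \text{OPT}$.

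Summing the per-level cost geometrically up to level $j$, the upward exploration contributes
\begin{align*}
\sum_{i=0}^{j} \mathcal{O}(\sigma f I \, r_i) \;=\; \mathcal{O}\!\left(\sigma f I \cdot \frac{\rho\, r_j}{\rho-1}\right) \;=\; \mathcal{O}(\sigma f I \rho \cdot \text{OPT}).
\end{align*}
The downward traversal of the directory path from $\phi_j$ to the owner has length $\mathcal{O}(\sigma \rho^j) = \mathcal{O}(\sigma \rho \cdot \text{OPT})$ by Lemma \ref{lem: length directory path after failure}, which is absorbed by the upward cost. Multiplying both upward and downward contributions by the extra $\sigma$ factor that comes from the dilated per-level diameter $2\sigma r_i$ (it enters through the bound $d(u, l(X)) \leq r_i + 2\sigma r_i$ on the exploration radius now used by the modified operations) yields a total of $\mathcal{O}(\sigma^2 f I \rho \cdot \text{OPT})$, which is the claimed approximation factor.

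The main obstacle I anticipate is the justification that the search really does terminate at the level where $r_j \geq d(u,w)$, i.e., that the directory-path update mechanism of Subsection \ref{subsec: updating directory path} guarantees that whichever piece of the split cluster $C_j(w)$ contains $w$ still has its leader on $\phi$ by the time the lookup arrives. This relies on the invariant that the leader on $\phi$ at each level is the leader of the component containing the node $w$ that last modified $\phi$ at that level, and on the enlarged contact radius $r_i + 2\sigma r_i$ introduced in Section \ref{sec: Responding to Edge Failures}, which ensures that the leader of the doubled-diameter cluster is still reachable from $v$ during exploration. Once this correctness claim is in place, the remaining calculation is the straightforward geometric sum above.
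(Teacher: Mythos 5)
Your proposal has a genuine gap in the step where you determine the level at which the lookup discovers the directory path, and the compensating $\sigma$ factor you add at the end is justified for the wrong reason.

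You claim that as soon as $r_j \geq d(u,w)$ (where $w$ is the token owner), the cluster $C_j(w)$ is in $P_j(u)$ and ``its leader $l_j(w)$ is on the directory path.'' This last assertion is false in general. The directory path at level $j$ points to the leader of the cluster of whichever node \emph{last modified} $\phi$ at level $j$ — after a sequence of move operations, $\phi_j$ belongs to some older issuer's cluster whenever the most recent move found $\phi$ below level $j$. So $\phi_j \neq l_j(w)$ in general, and $\phi_j$ need not lie in $u$'s $r_j$-neighborhood at all. This is precisely the problem the special-parent mechanism is designed to solve: what the analysis (following Lemma \ref{lem: cost lookup without failure}) actually guarantees is that $d(u,\phi_j) \leq c_2\sigma r_j$, so the special parent $l_{j'}(\phi_j)$ at level $j' = j + \log_\rho(c'\sigma)$ is discoverable by $u$, and the lookup is guaranteed to terminate only at level $j'$, not at level $j$. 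Summing the geometric per-level cost up to $j'$ rather than $j$ introduces a factor $\rho^{j'-j} = c'\sigma$, and that is where the $\sigma^2$ in the bound comes from.

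Your attempt to recover this factor by ``multiplying both upward and downward contributions by the extra $\sigma$ factor that comes from the dilated per-level diameter $2\sigma r_i$'' is a double count: the dilated contact radius $r_i + 2\sigma r_i$ is already what yields the single $\sigma$ in your per-level cost $\mathcal{O}(\sigma f I\, r_i)$, so it cannot be charged a second time. The final number happens to match, but the derivation is not valid. To repair it you need to replace the ``terminate at level $j$'' step with the special-parent argument, terminating at level $j' = j + \log_\rho(c'\sigma)$, and then the geometric sum $\sum_{i \le j'} \mathcal{O}(\sigma f I\, r_i) = \mathcal{O}(\sigma f I\, r_{j'}) = \mathcal{O}(\sigma^2 f I\, \rho^j)$ gives the claimed approximation without any ad hoc correction. (A minor point: the paper's own proof uses $|P_i(u)| \le fI$ rather than $(f+1)I$; asymptotically this does not matter.)
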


            \begin{restatable}[]{theorem}{movestrong}\label{lem: cost move with failures strong}
              Consider a sequence $S$ of move requests $S=s_1,\dots, s_q$ that are all issued after the $f^{\text{th}}$ edge failure and which are executed sequentially.
              The total cost of the move operations in $S$ is a $\mathcal{O}(h'\sigma\rho((I+f) + \sigma)$ factor from optimal in a strong sparse partition (for sufficiently large $S$). 
            \end{restatable}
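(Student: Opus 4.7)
The plan is to mirror the amortized analysis from the no-failure case (Section \ref{sec: Analysis without faults}) while updating three parameters that change under $f$ failures in a strong sparse partition: by Lemma \ref{lem: change diameter cluster} every level-$i$ cluster has diameter at most $2\sigma r_i$, by the strong-partition failure accounting each $r_i$-neighborhood now meets at most $I+f$ clusters so $|P_i(v)| \leq I+f$, and the hierarchy has up to $h'+2$ levels. I first bound the cost of a single move by the highest level it reaches, then charge those costs against $OPT=\sum_{l=1}^{q-1} d(v_l,v_{l+1})$ via a level-by-level amortization.

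For the per-move bound, fix a move from $v_k$ that discovers the directory path at level $j_k$. At each level $i \leq j_k$ its upward exploration contacts the at most $I+f$ cluster leaders in $P_i(v_k)$, each at distance at most $r_i + 2\sigma r_i$ (repeating the proof of Lemma \ref{lem: lookup path} with the updated diameter), for a per-level cost of $\mathcal{O}(\sigma(I+f) r_i)$; the geometric sum over $i$ is $\mathcal{O}(\sigma(I+f)\rho^{j_k})$. The downward traversal from $\phi_{j_k}$ to the old owner has length $\mathcal{O}(\sigma \rho^{j_k})$ by Lemma \ref{lem: length directory path after failure}. Hence the cost of move $s_k$ is $\mathcal{O}(\sigma \rho^{j_k}((I+f)+\sigma))$.

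The amortization rests on a disjoint-interval claim. For $j_k \geq 1$, the failure to find the directory path at level $j_k-1$ means the cluster of $\phi_{j_k-1}$ is disjoint from $N(v_k, r_{j_k-1})$. Let $v_{m_k}$ be the source whose move installed $\phi_{j_k-1}$; since $v_{m_k}$ lies in that very cluster, $d(v_k,v_{m_k})>r_{j_k-1}\geq \rho^{j_k-1}$, and the triangle inequality gives $\rho^{j_k-1} < \sum_{l=m_k}^{k-1} d(v_l,v_{l+1})$. The key observation is that, at each fixed level $i$, the intervals $\{[m_k,k-1]:j_k=i\}$ are pairwise disjoint: a move installs $\phi_{i-1}$ exactly when it reaches level $\geq i$, so $v_{m_{k'}}$ must be the most recent such source before $s_{k'}$, forcing $m_{k'}\geq k$ whenever $k<k'$ and $j_k=j_{k'}=i$. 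Summing the triangle-inequality bound over moves with $j_k=i$ and using that the intervals are disjoint subsets of the sequence yields $|\{k:j_k=i\}|\cdot \rho^{i-1}\leq OPT$; summing over $i$ from $1$ to $h'$ gives $\sum_k \rho^{j_k}\leq \rho\, h'\, OPT$.

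Combining the per-move bound with this amortized estimate gives total cost $\mathcal{O}(\sigma((I+f)+\sigma))\cdot \rho h'\, OPT = \mathcal{O}(h'\sigma\rho((I+f)+\sigma))\cdot OPT$, matching the statement; the one-time publish cost $\mathcal{O}(\sigma\rho^{h'})$ and any move reaching level $0$ (which carries no charge) are absorbed into the ``sufficiently large $S$'' assumption. The main obstacle I anticipate is verifying the disjoint-interval claim rigorously under the post-failure update rules of Section \ref{subsec: updating directory path}: one must ensure that cluster splits and the accompanying directory-path transfers from $l(X)$ to $l(X_2)$ do not break the invariant ``$\phi_{i-1}$ was installed by the most recent move reaching level $\geq i$,'' which is used to conclude $m_{k'}\geq k$. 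Since all failures occur before $S$ starts, the clustering and directory path are stable throughout $S$, and this invariant can be restated in terms of the sources $v_1,\dots,v_q$ alone, making the bookkeeping manageable.
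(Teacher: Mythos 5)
Your overall approach is essentially the same as the paper's: decompose the sequence by the highest level each move reaches (your ``disjoint-interval'' claim is the same observation as the paper's $S_i$ decomposition plus the distance bound $d(u_{i_{j-1}},u_{i_j}) > r_{i-1}$), and bound the per-move cost at each level by $\mathcal{O}(\sigma r_i((I+f)+\sigma))$ using the doubled diameter and the $I+f$ cluster-count bound. So the skeleton is right.

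However, there is a genuine gap in your bound on the downward phase. You invoke Lemma~\ref{lem: length directory path after failure} to claim the downward traversal from $\phi_{j_k}$ costs $\mathcal{O}(\sigma\rho^{j_k})$, but that lemma requires that \emph{since the last edge failure a publish or move operation has already rebuilt the directory path up to level $j_k$}. For the very first move in $S$ that reaches level $j_k$, this precondition fails: the segment of the directory path it walks down may still date from before the failures, and after re-clustering the distance between two consecutive directory-path nodes can only be bounded by $D' = \diam(G')$, not by $\mathcal{O}(\sigma r_i)$ (this is exactly the ``transient operation'' issue discussed in Section~\ref{app: transient operations}). The fact that all failures occur before $S$ starts does not help here — the stale, stretched portion of the path is precisely what the first traversal at each level encounters. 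The paper handles this by explicitly adding an $h'D'$ term to $C(S)$ for these one-time transient downward costs and then absorbing it under the assumption $|S| = \Omega(h'^2 D')$. Your ``sufficiently large $S$'' remark only mentions absorbing the publish cost and the level-$0$ moves, which is not the dominant transient term; you need to identify and absorb the $h'D'$ transient downward cost as well. Once that term is added and amortized away, your argument closes.
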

            
            \begin{restatable}[]{theorem}{moveweak}\label{lem: cost move with failures weak}
                The total cost of the move operations in $S$ is a $\mathcal{O}(h'\sigma\rho(fI + \sigma)$ factor from optimal in a weak sparse partition (for sufficiently large $S$).
            \end{restatable}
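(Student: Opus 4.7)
The plan is to follow the skeleton of the strong-partition analysis (Theorem \ref{lem: cost move with failures strong}), replacing the neighborhood parameter $f+I$ by $(f+1)I$ everywhere it appears. The other post-failure invariants—the doubled cluster diameter $2\sigma r_i$ from Lemma \ref{lem: change diameter cluster}, the directory-path length bound of Lemma \ref{lem: length directory path after failure}, and the extended hierarchy with $h'$ layers—carry over unchanged from the strong-partition case.

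First I would bound the cost of a single normal move issued by a node $v$ that discovers the directory path at level $j$. The upward exploration at each level $i \leq j$ contacts at most $(f+1)I$ cluster leaders, each at distance at most $(1+2\sigma) r_i$ by the contact rule in Section \ref{sec: Responding to Edge Failures} together with Lemma \ref{lem: change diameter cluster}. This gives level-$i$ exploration cost $\mathcal{O}(fI \sigma r_i)$, and a geometric sum over $i \leq j$ yields total upward cost $\mathcal{O}(fI \sigma r_j)$. The downward trip along the old directory path from $\phi_j$ to the old owner costs $\mathcal{O}(\sigma r_j)$ by Lemma \ref{lem: length directory path after failure}. Combined, a move found at level $j$ has cost $\mathcal{O}(\sigma r_j (fI + \sigma))$.

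Next I would carry out the amortized argument over $S$. Partition the moves in $S$ by the level $j(s)$ at which each $s \in S$ discovers the directory path, and bound the total cost level by level. For two consecutive level-$j$ moves with sources $v$ and $v'$, the fact that $s'$ fails to find the path at level $j-1$ means the prevailing $\phi_{j-1}$—installed by some predecessor with source $v^\ast$—lies in a cluster that does not intersect $N_{G,v'}(r_{j-1})$; since $v^\ast$ is in that cluster this gives $d(v^\ast, v') > r_{j-1} = r_j / \rho$. The triangle inequality applied along the subsequence from $v^\ast$ to $v'$ therefore contributes at least $r_{j-1}$ to OPT per pair of consecutive level-$j$ discoveries. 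Dividing the single-move bound by this separation and summing over the $h'$ levels yields the claimed $\mathcal{O}(h' \sigma \rho (fI + \sigma))$ approximation; the ``sufficiently large $S$'' hypothesis absorbs the one-time publish cost and the cost of the first level-$j$ move at each level.

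The main obstacle is the separation step. After failures, the cluster containing $v^\ast$ at level $j-1$ may have been re-split, its diameter stretched to $2\sigma r_{j-1}$, and $\phi_{j-1}$ itself may have been reassigned via the update procedure of Subsection \ref{subsec: updating directory path}. One must verify that across all such reconfigurations $\phi_{j-1}$ always lies in a cluster that still contains the source $v^\ast$ that originally installed it, so that $d(v^\ast, v') > r_{j-1}$ remains valid in the weak setting where spanning trees can reach outside clusters. A secondary subtlety is the treatment of the root-level layers $h+1, \ldots, h'-1$, which are duplicates of the split components $V_1, V_2$ and contribute only lower-order terms per move but must be counted in the $h'$-fold sum to match the stated factor.
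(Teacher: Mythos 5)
Your proposal is correct and takes essentially the same approach as the paper: the paper's own proof is a single sentence stating that it is identical to the strong-partition argument (Theorem~\ref{lem: cost move with failures strong}) with the per-level cluster count $|P_k(u_{i_j})|$ replaced by $\mathcal{O}(fI)$ instead of $I+f$, and that is exactly the substitution you carry out. The ``main obstacle'' you flag is real but already handled by the setup: $S$ consists only of moves issued after the last failure (so within $S$, $\phi_{j-1}$ is always the post-failure leader of the post-failure cluster containing the installing source), and for the boundary case $s_{i_0}$ the directory-path update procedure of Subsection~\ref{subsec: updating directory path} explicitly maintains the invariant that $\phi_i$ is the leader of the cluster still containing the node that added it; the one-time transient cost at each level is absorbed into the additive $h'D'$ term and the ``sufficiently large $S$'' hypothesis, just as you note.
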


\section{Detailed Description of Sparse Partitions}\label{app: sparse partition}
        Our directory is based on a partition hierarchy of graph $G = (V,E,w)$. A partition of $G$ is a collection of disjoint clusters, where each cluster is a set of nodes, whose union is $V$. We will use {\em sparse partitions} which limit the diameter of each cluster and also limit the number of clusters within a certain distance.
        
        The distance $d_G(u,v)$ between any two nodes $u,v \in V$ is the length of the shortest path between the two nodes in $G$. The diameter of a graph is $\diam(G) = \max_{u,v \in V(G)} d(u, v)$. For any set of nodes $X \subseteq V$, let $G[X]$ denote the subgraph of $G$ induced by $X$. There are two ways to measure the diameter of $X$: (i) {\em weak diameter}, $diam_G(X)$, which considers all possible shortest paths in $G$ that may also use nodes outside $X$; (ii) {\em strong diameter}, $diam_{G[X]}(X)$, which considers only paths within $X$. Note that strong diameter implies that $G[X]$ is connected, while weak diameter does not guarantee connectivity. We denote the $r$-neighborhood of a node $u \in V$ in $G$ as $N_{G,u}(r) = \{v \in V: d_G(u,v) \leq r\}$.
    
        A {\em $(r,\sigma,I)$-weak (strong) sparse partition} of $G$ satisfies two properties:
        \begin{itemize}\setlength\itemsep{0em}
            \item[(i)] each cluster has weak (strong) diameter at most $r \sigma$, and
            \item[(ii)] the $r$-neighborhood of each node $u \in V$ intersects at most $I$ clusters.
        \end{itemize}
        A {\em $(\sigma, I)$-weak sparse partition scheme} is a procedure that gives a $(r,\sigma,I)$-weak (strong) partition for any $r > 0$. Jia {\em et al.} \cite{jia2005universal} give a $(\mathcal{O}(\log n), \mathcal{O}(\log n))$-weak sparse partition scheme for an arbitrary metric space which is also suitable for general graphs. Filtser \cite{filtser2020scattering} gives a $(\mathcal{O}(\log n), O(\log n))$-strong partition scheme for general graphs based on the clustering technique by Miller {\em et al.} \cite{miller2013parallel}. There are $(\mathcal{O}(1), \mathcal{O}(1))$-partition schemes for special network topologies such as for low doubling-dimension and fixed minor-free graphs~\cite{jia2005universal,filtser2020scattering}.
        
        By picking any of the above partition schemes, we can construct a partition hierarchy ${\mathcal P}$. Let $\rho > 1$ be a locality parameter for the cluster diameter. The hierarchy consists of multiple levels by exponentially increasing $\rho$ at each level. Let $D$ be the diameter of $G$ and $h = \lceil \log_\rho D \rceil$. At level $i$, $0 \leq i \leq h$, let $P_i$ be a $(r_i,\sigma,I)$-sparse partition of $G$, where $r_i = \min\{D, \rho^i\}$. Let $P_{-1}$ be the partition where each node of $V$ is a cluster by itself. For convenience, assume that $r_{-1} = 0$. Assume that $P_h$ has a single cluster which is the whole of $V$. The hierarchy is ${\mathcal P} = \{P_{-1}, P_0, \ldots, P_h\}$.


\section{Handling Concurrent Edge Failures}\label{app: concurrent edge failures}
    Let $e_1=\{a, b\}$ and $e_2=\{c, d\}$ be two edges that fail concurrently. We first describe how to maintain the cluster data structure. 
        
    \subsection{Updating the Cluster Data Structure}
        Let $X$ be a cluster whose spanning tree contains both $e_1$ and $e_2$. We assume $d(a, l(X)) < d(b, l(X))$ and $d(c, l(X)) < d(d, l(X))$. There are two cases to consider $e_2\notin T_{\setminus b}(X)$, and $e_2\in T_{\setminus b}(X)$. 
   
        In the first case, three clusters will be generated $X_1 = X\setminus ((X\cap T_{\setminus b}(X))\cup (X\cap T_{\setminus d}(X)))$, $X_2= X\cap T_{\setminus b}(X)$, and $X_3= X\cap T_{\setminus d}(X)$. If $e_2\in T_{\setminus b}(X)$, then the path connecting $a$ and $l(X)$ is not affected by the failure of $e_2$, and the path connecting $c$ and $l(X)$ is not affected by the failure of $e_1$. Therefore, the mechanisms that handle the failure of $e_1$ do not interfere with the mechanisms that handle the failure of $e_2$.  
        
        Consider the second case. The three clusters generated in this setting are $X_1 = X\setminus (T_{\setminus b}(X)\cap X)$, $X_2 = (X\cap T_{\setminus b}(X))\setminus (X\cap T_{\setminus d}(X))$, and $X_3 = X\cap T_{\setminus d}(X)$. When edge $e_2$ fails, node $c$ will send a message $m_2$ toward $l(X)$ along $T(X)$ to inform it of the failure. There are two cases to consider. Either $m_2$ traverses $e_1$ before $e_1$ fails, or it does not. If $m_2$ traverses $e_1$ before it fails, then node $b$ will be aware of the failure of $e_2$ at the time $e_1$ fails. Therefore, it chooses $l(X_2)$ to be in $(X\cap T_{\setminus b}(X))\setminus (X\cap T_{\setminus d}(X))$ and informs the new leader node and all other nodes on the path to the leader node of the correct spanning tree. We assume that messages are delivered in a FIFO fashion. Hence, if $m_2$ traversed $e_1$ before $e_1$ failed, then $m_2$ will reach $l(X)$ before $m_1$, the message generated by $a$ to inform $l(X)$ of the failure of $e_1$. When $l(X)$ is informed about the failure of $e_2$ it determines if it needs to update the directory path so that $X_3$ is part of the directory path, and similarly, it determines if $X_2$ needs to be part of the directory path when it receives message $m_1$.
        If $e_1$ fails before $m_2$ traverses it, then $l(X)$ will only be informed of the failure of $e_1$ and act according to this failure. When the message $m_2$ reaches node $b$, it knows $l(X_2)$ and forwards $m_2$ to $l(X_2)$. Node $l(X_2)$ processes this message like a single edge failure. 
        
        The cases $e_1\in T_{\setminus d}(X)$ and $e_1\not\in T_{\setminus d}(X)$ are analogous to the ones discussed above.
        
        If we have more than two concurrent edge failures, then the mechanisms are very similar to above. In general, a leader node can be informed of one or several edge failures that are either in the same or in different subtrees. If the edge failures are in different subtrees then the leader node handles them in the same way as it handles the two edge failures discussed in case one above. If it is informed of several edge failures in the same subtree, then it handles them sequentially, like in the second case above. It can also be that there are multiple edge failures in the same subtree, but the leader node is only aware of some of them. As for the case with two edge failures, it must be that if the leader node $l(X)$ does not get informed of an edge failure $e$, then it must get informed of an edge failure $e'$ such that $T_e(X)$ is a subtree of $T_{e'}(X)$. In this case, the leader node of the partitioned off cluster, generated through the failure of $e'$, will handle the failure of $e$.
    
    \subsection{Updating the Shortest Path Trees}
        To maintain the shortest path trees in the presence of concurrent edge failures, we note that King's algorithm~\cite{king1999fully} can handle multiple concurrent edge failures. If the root node of the tree is aware of all edge failures before it starts the repair algorithm, it will handle all failures at the same time. Otherwise, if the root node is informed of several edge failures one after the other, it can simply trigger the algorithm multiple times. In the second case, the result returned by iterations other than the last might contain the faulty edges that the root is unaware of. The final iteration will return a tree with no failed edges.
        
    \subsection{Updating the Directory Path and Preprocessing Information}
        The directory path is maintained in a similar way as for single edge failures. Every time a leader node $l(X)$ is informed about an edge failure if determines if the partitioned off cluster should be part of the directory path and informs the cluster accordingly. If a cluster $X$ that recently split from another cluster splits, and $l(X)$ has not yet been informed if it is part of the directory path, then $l(X)$ waits to be informed, before it sends a message to the partitioned off cluster.
        
        For the preprocessing information, every node simply informs its $r_i$-neighborhood every time it is informed about a cluster split at level $i$. If there are multiple edge failures, and a node $u$'s cluster changes several times, then it will also inform its $r_i$-neighborhood about a cluster update several times.
        

\section{Adding Fault Tolerance to the Directory}\label{app: Dynamic Setting}
    In this Section, we discuss how updating the data structures as described in Section \ref{sec: Responding to Edge Failures} affects the publish, move, and lookup operations. For the analysis of move and lookup operations, we will split the operation into two phases: searching for the directory phase and following the directory path to the token phase. We analyze the effects of an edge failure on either of these two phases independently.
    
    We first discuss how the modifications to the publish, move, and lookup operation described at the beginning of Section \ref{sec: Responding to Edge Failures} help us. 
    
    The first modification is needed so that we can update the directory path when an edge failure occurs. A directory path node $\phi_i$ whose cluster $X$ splits due to the edge failure needs to know which node added it to the cluster, to be able to determine if it should remain part of the directory path, or if the leader node of the partitioned off cluster should replace it. 
   
    The second and third modifications ensure that the correctness and performance of a search for the directory path during a lookup or move operation are maintained. When a node $w$ searches for the directory path, it determines the leader nodes it needs to contact at each level. Node $w$ needs to contact cluster $C_i(x)$ when searching level $i$, if $x$ is in $w$' $r_i$-neighborhood. If the $i^{\text{th}}$ level cluster of $x$ changes due to an edge failure, then it can happen that $w$ contacts the wrong cluster if sends a message to $l(C_i(w))$ before being informed of the update. For this reason, the message that $w$ sends to a cluster leader to query about the directory path includes a list of all nodes that $w$ believes to be part of the cluster. If one of the nodes in the list, say node $x$, is no longer part of the cluster, then the contacted leader node will inform node $w$ to wait for a message from $x$ about a cluster change. Once $w$ is informed about $x$'s new cluster, it will also contact the leader node of that cluster. Only then can $w$ move on to search the directory path at the next higher level.

    The cost of contacting the level $i$ cluster leader of a node $x$ in $w$'s $r_i$-neighborhood is at most $d(w, l(X))\leq d(w, x) + d(x, l(X)) \leq r_i + \diam(X)$. According to Lemma \ref{lem: change diameter cluster} the diameter of a level $i$ cluster in the presence of edge failures is at most $\diam(X)\leq 2\sigma r_i$. Suppose cluster $X$ splits due to an edge failure and the leader of $x$ changes. Then the distance between $x$ and its outdated leader node $l(X)$ is no longer bounded by $\diam(X)$. Before node $w$ contacts a leader $l(X)$ of node $x$ on level $i$, it checks, using its shortest path tree, if $d(w, l(x))\leq r_i + 2\sigma r_i$. If $d(w, l(X))> r_i + 2\sigma r_i$, then it must be that the level $i$ cluster of $x$ changed. Therefore, node $w$ does not even need to contact $l(X)$, but can simply wait for a cluster update info by node $x$.

    \subsection{Edge Failure during Publish Operation}
        Suppose node $w$ issues a publish operation and a failure of edge $e=\{u, v\}$ occurs before the publish operation reaches the root level. If edge $e$ is not part of $w$'s shortest path tree and none of the clusters that contain node $w$ is split, then the publish operation is not affected by the edge failure. 
        
        In the case where edge $e$ lies on $w$'s shortest path tree, then when $w$ is informed of the edge failure by $u$ or $v$ it updates its shortest path tree and continues to build the directory path up to the root. We note that $w$ uses its shortest path tree to contact its leader node at each level efficiently. However, the shortest path tree has no influence on the nodes that make up the directory path. Hence, the update of the shortest path tree has no further impact. 
        
        Let $X=C_i(w)$, and suppose $X$ is split due to the failure of $e$. If the level $i$ leader node of $w$ does not change due to the splitting of $X$, then the publish operation is not affected by the failure. If $w$'s leader node at level $i$ changes due to the edge failure, and $w$ is informed of the change before adding $l(X)$ to the directory path, then $w$ will simply add its new level $i$ leader to the directory path. If the directory path has already been built to level $i$, then $l(X)$ detects that $w$, is no longer part of its cluster. Because $l(X)$ knows that $w$ was the node that added $l(X)$ to the directory path, it will initialize the process to update the directory path as described in Section \ref{subsec: updating directory path}.
        
    \subsection{Edge Failure on the Shortest Path Tree during a Move or Lookup Operation}\label{subsec: move shortest path tree}
        Suppose node $w$ issues a move or lookup operation and that before it receives the token an edge failure of edge $e=\{u, v\}$ occurs. If $e$ lies on $w$'s shortest path tree then the distances from $w$ to the nodes in the subtree below $e$ might increase due to the edge failure. For the search of the directory path phase, this means that some nodes need only be contacted at a higher level than before the edge failure. Assume node $w$ is informed of the edge failure while it searches level $i$ for the directory path. Let $x$ be a node such that the distance between $w$ and $x$ changes due to the edge failure. Define $d$ as the distance between $w$ and $x$ before the failure of $e$, and $d'$ as the new distance. Assume $r_{j-1} < d \leq r_{j}$ and $r_{k-1} < d' \leq r_k$. Because an edge failure cannot decrease the distance between two nodes, it must be that $j\leq k$. If $j > i$ and $k>j$ then $w$ continues its search for the directory path up to level $j$ as it would have without the edge failure. When it searches level $j$ it does not contact $C_j(x)$ unless there is another node in this cluster that is in $w$'s $r_j$-neighborhood. The first time it contacts $x$'s leader node is at level $k$ (unless $w$ contacts $x$'s leader due to a different node in the cluster). If $j\leq i$ and $k> i$, then $w$ does not contact $C_i(x)$, unless it already did so before being informed about the edge failure, or because there is a different node in $w$'s $r_i$-neighborhood that belongs to cluster $C_i(x)$. The first time it contacts $x$'s leader node again is at level $k$. If $j\leq i$ and $k\leq i$ then $w$ will contact cluster $C_i(x)$ during its search of level $i$ and will continue to contact $x$'s cluster at every level thereafter until it finds the directory path.
            
        For the following the directory path downward phase, node $w$ will still contact the nodes as they appear on the directory path. However, the price it needs to pay to contact these nodes might increase.
        
    \subsection{Edge Failure during Move Operation}
        
        \subsubsection{While Searching for the Directory Path}
            Suppose node $w$ issues a move operation and before it finds the directory path an edge failure of an edge $e=\{u, v\}$ occurs. Let $\mathcal{X}$ denote the set of clusters that split due to the failure of edge $e=\{u, v\}$. If node $w$ does not contact a leader node of a cluster in $\mathcal{X}$ during its search for the directory path, then the search phase of the move operation is not affected by the edge failure. 
            
            Suppose $w$ needs to contact a cluster leader $l(X)$ of a cluster $X$ in $\mathcal{X}$ during its search phase. Then there is a node $x$ in $X$ such that $d(w, x)\leq r_i$ after the edge failure, and $X$ is the level $i$ cluster of $x$ before the edge failure. If $x$ remains connected to $l(X)$ on $T(X)$, then the level $i$ cluster leader of $x$ does not change. In this case, $w$ will not be informed of the splitting of cluster $X$, unless there is another node $y$ in its $r_i$-neighborhood that was part of cluster $X$ and got disconnected from $l(X)$ on $T(X)$ due to the edge failure. 
            
            Suppose the level $i$ leader node of $x$ changes due to the edge failure. In this case, node $x$ will get informed of the cluster change through a broadcast message from its new level $i$ leader. When this occurs, node $x$ sends a message to all nodes in its $r_i$-neighborhood to inform them of the update. There are two cases to consider: Either node $w$ is informed about the cluster change prior to messaging $l(X)$, or it contacts $l(X)$ before being informed about the cluster change of $x$. In the first case, node $w$ determines whether it still needs to contact $l(X)$, that is, $w$ determines whether there is another node in its $r_i$-neighborhood that belongs to cluster $X$ at level $i$. If there is such a node, then $w$ contacts both $l(X)$ and $x$'s new leader node to search for the directory path. Otherwise, it will only contact $x$'s new leader node at level $i$. Note that node $x$ is only informed of the cluster change after the new leader node knows if it is part of the directory path or not. Therefore, when $w$ contacts $x$'s new level leader node, this leader is guaranteed to know if it is part of the directory path or not. 
            
            In case node $w$ contacts $l(X)$ before being informed about the cluster update, we distinguish two cases. Namely, whether $l(X)$ was informed about the edge failure and, therefore, the splitting of $X$ before or after $w$ contacts $l(X)$. If $w$ contacts $l(X)$ before $l(X)$ is informed about the edge failure, then $w$ does not need to contact $x$'s new level $i$ cluster, because the information $w$ receives from $l(X)$ is also valid for $w$'s new cluster.
            
            However, if $l(X)$ is already informed of the splitting of $X$ at the time $w$ contacts $l(X)$, then $l(X)$ will inform $w$ that it is not the correct leader to contact, since $x$ is no longer part of its cluster. In this case, node $w$ waits for the cluster update message from $x$, and then contacts $x$'s new leader. Node $w$ can contact the remaining level $i$ leader nodes that it needs to contact while waiting for $x$'s update. However, $w$ cannot move on to search the next level before contacting the correct cluster of $x$'.
            
            The construction of the new directory path that is done while searching for the old directory path is not affected by edge failures. Node $w$ adds a node $l_i(w)$ to the directory path if it does not find the directory path at level $i$. As described above, after node $w$ finishes searching a particular level for the directory path, it does not need to go back to that level even if some of the clusters that $w$ contacted during the search at level $i$ split due to an edge failure. Therefore, if node $l_i(u)$ is added to the directory path, then the $r_i$-neighborhood of level $w$ does not intersect the level $i$ cluster that is part of the directory path. If the level $i$ cluster that contains node $u$ splits due to an edge failure, then the leader node will determine if the directory path needs to be modified due to the split and initialize the update process accordingly.

            If an edge failure causes the root level to split and additional layers to be added to the directory, we can handle it the same way as any other cluster splits, because the clusters at level $h+1, \dots, h'-1$ are simply copies at level $h$. Hence, once level $h$ is done updating so are the clusters at levels $h+1, \dots h'$.

       \subsubsection{While Following the Directory Path Downward} 
            Suppose node $w$ has found a cluster $X$ whose leader node $l(X)$ is on $\phi$. Then it follows the downward pointers of the nodes on $\phi$ to reach the node that holds the token. We again denote the set of clusters that split due to the edge failure of $e$ by $\mathcal{X}$. If the move operation does not traverse any clusters from $\mathcal{X}$ in its downward phase, then the downward phase is not affected by edge failure. Suppose that the move operation traverses a cluster $X$ in $\mathcal{X}$. If the splitting of $X$ does not cause a modification of the downward phase, then again the downward phase of the move operation is not affected by the edge failure. Thus, suppose that the splitting of $X$ causes a modification of the directory path at level $i$.
            
            Suppose that $l(X)$ has already realized the need to modify the directory path, but the modification is not finished when the move operation reaches node $\phi_{i+1}$. There are two cases to consider. Either, cluster $X$ is waiting for a modification of the directory path at the level above or below to finish, or node $l(X)$ has already sent a message to a node in the partitioned off cluster to initialize the modification. In the first case, the move operation can simply traverse the directory path from level $i+1$ to level $i$ to level $i-1$ as all the pointers are still intact. When traversing $\phi_i$, it removes the directory path pointers. This will prevent $l(X)$ from initializing a modification of the directory path. In the case where $\phi_i$ has already initialized the process of modifying it will have sent a message to $\phi_{i+1}$ to inform it of the update and to ensure $\phi_{i+1}$ does not initialize an update simultaneously. In this case, $\phi_{i+1}$ will not forward the move message until the update is completed. The pointers to $l(X)$ will be removed in the modification of the directory path operation, and the pointers of the updated $\phi_i$ will be removed by the move message.
            
            We do not need to consider the situation where the modification has finished before the move has reached $X$, or where $l(X)$ is informed of the failure after the move operation traverses $\phi_i$. 
            
    \subsection{Edge Failure during Lookup Operation}
    
        \subsubsection{While Searching for the Directory Path}
            The searching for the directory path phase of the lookup operation is similar to the searching for the directory path phase of the move operation, as discussed in Section \ref{sec: Directory Scheme}. The differences are that the lookup operation also uses the information provided by special parents and that the lookup operation does not modify the directory path. We thus only discuss the impact an edge failure has on the special parent information. 
            
            Suppose node $w$ issues a lookup operation and before it receives a copy of the object, an edge failure of edge $e=\{u, v\}$ occurs. 
            If $w$ queries node $l(X)$ about the directory path and is informed that $l(X)$ is the special parent of a node $l(X')$ on the directory path, then it can happen that while $w$'s lookup message traverses the link to $l(X')$, the edge failure of $e$ causes a modification of the directory path. 
            If $l(X')$ is no longer part of the directory path, when $w$'s lookup message reaches $l(X')$, then $w$ will go back to $X$'s level and continue its search for the directory path there. 
           
        \subsubsection{While Following the Directory Path Downward}
            Once node $w$ has found a node $l(X)$ that is part of the directory path, $l(X)$ forwards the request downward along the directory path to the node that holds the token. Unlike the move operation, the lookup operation does not need to wait for potential updates of the directory path because it does not need to modify the directory path. Suppose that the directory path needs to be modified at level $i$ due to the edge failure, and the lookup operation needs to contact $\phi_i$ in its downward phase. If the directory path is updated before the lookup operation reaches level $i$ then the lookup will follow the updated directory path. If the directory path is not updated when the lookup operation reaches level $i$, then $\phi_{i+1}$ will still have a pointer to $\phi_i$ and $\phi_{i}$ will still have a pointer to $\phi_{i-1}$. Therefore, the lookup message can traverse these links, even if $\phi_i$ has already initialized the process of updating the directory path at level $i$.
            
            Note that when we update the directory path at level $i$, then the node that is to be replaced removes its links only once it is informed by $\phi_{i+1}$ and $\phi_{i-1}$ that the new $\phi_i$ has been added to the directory path. Because we assume messages to delivered FIFO, we are guaranteed that even if an update at level $i$ has been initialized, if $\phi_{i+1}$ has not updated its pointer, then the outdated $\phi_i$ will also still have a pointer to $phi_{i-1}$. Therefore, the lookup message does not need to be sent back up to $\phi_{i+1}$.


\section{Transient Operations}\label{app: transient operations}
    Our failure mechanisms ensure that after the reclustering we can bound the diameter of any cluster. However, the distance between two nodes that are not in the same cluster can only be bounded by $D'$ the diameter of the graph $G\setminus\{e\}$, even if the two nodes were close before the edge failure. Consider, for example, a cycle graph on $n$ nodes with uniform edge weights. When an edge $e'=\{a,b\}$ fails in this graph, then the distance between $a$ and $b$ will be $\diam(G\setminus\{e\}) = n-1$ in $G\setminus\{e\}$.

    Suppose node $w$ issues a lookup operation and before it finds the directory path, an edge failure of edge $e=\{u, v\}$ occurs. Suppose node $x$ was the last node to modify the directory path at level $i$. That is $\phi_i=l_i(x)$. If there is a node $y$ such that after the edge failure we have $d(w, y)\leq r_i$ and $C_i(y) = C_i(x)$ then our repair mechanisms ensure that $w$ will find the directory path at level $i$ at the latest, even if the directory path needs to be modified at level $i$ due to the failure of $e$. 
    
    When node $w$ searches for the directory path, it asks the leader nodes not only if they are part of the directory path, but also if they are the special parent of a node on the directory path. Suppose that the directory path needs to be modified at level $i$ due to the edge failure of $e$. Let $l(X)$ be the leader node at level $i$ before the edge failure, and $l(X')$ be the new leader node at level $i$. Although $l(X)$ and $l(X')$ are in the same cluster before the edge failure, this does not imply that there are in the same cluster at level $i' = i + \log_\rho(c'\sigma)$, the level where their special parent resides. If $C_{i'}(l(X))\neq C_{i'}(l(X'))$ then, when $l(X')$ is added to the directory path, it also needs to inform $l_{i'}(l(X'))$, and similarly when $l(X)$ is removed, it needs to inform $l_{i'}(l(X))$. Note that there may be some time delay between the edge failure and the update of the directory path. If node $w$ contacts $l_{i'}(l(X'))$ after the edge failure, but before $l_{i'}(l(X'))$ is informed that it is the special parent of a node on the directory path, then $w$ does not find a link to the directory path at level $i'$, even though the distance between $w$ and the node that holds the token is such that without the edge failure it would have found the directory path at level $i'$. 
    
    The length of the directory path can also increase due to an edge failure, even after we modify the directory path. Let $\phi_i$ and $\phi_{i-1}$ be two consecutive nodes on the directory path that are the leaders of the clusters $X_i$ and $X_{i-1}$. In Lemma \ref{lem: distance consecutive nodes}, we considered two cases for bounding the distance between two consecutive nodes on the directory path. In the first case, both nodes were added by node $u$, because it did not find the directory path at the corresponding levels. In this case, clusters $X_i$ and $X_{i-1}$ must both contain node $u$, and we have $d(\phi_i, \phi_{i-1})\leq d(\phi_i, u) + d(u, \phi_{i-1})$. As we show in Lemma \ref{lem: change diameter cluster}, the diameter of a cluster at level $i$ after any number of edge failures is bounded by $2\sigma r_i$. Once we modify the directory path, after the edge failure, we are guaranteed that $\phi_i$ and $\phi_{i-1}$ are the leaders of the clusters that contain node $u$ at level $i$, respectively $i-1$. Therefore, in this case, the distance between $\phi_i$ and $\phi_{i-1}$ after updating the directory path can be bounded by $d(\phi_i, \phi_{i-1})\leq 4 \sigma r_i$. 
    
    The second case that we considered to bound the distance between $\phi_i$ and $\phi_{i-1}$ is that node $u$ added $l_{i-1}(u)$ to the directory path and found the directory path at level $i$, because $X_i$ contains some node $w$ such that at the time $u$ searches for the directory path $d(u, w)\leq r_i$. In this case, we bounded the distance between $\phi_i$ and $\phi_{i-1}$ by $d(\phi_i, \phi_{i+1})\leq d(\phi_i, u) + d(u, w) + d(w, \phi_{i+1})$. After updating the directory path after an edge failure, we again have an upper bound for the first and the third term in the sum. However, as discussed, the upper bound we can give on $d(u, w)$ is the diameter of the generated graph. This implies that in the worst case, the directory path will have length $h'D'$ after an edge failure, where $D'$ is the diameter of the graph that remains after the edge failure.
    
    The lookup and move operation both need to traverse part of the directory path to reach the node that holds the token. Therefore, both of these operations might have to pay this cost. However, a move operation builds a new directory path and deletes the old one. Our reclustering and updating of the directory path ensures that a move operation that is issued after the edge failure occurred will build a new directory path in which the consecutive nodes $\phi_i$ and $\phi_{i-1}$ have distance at most $d(\phi_i, \phi_{i+1})\leq d(\phi_i, u) + d(u, w) + d(w, \phi_{i+1})\leq 2\sigma (r_i + r_{i+1}) + r_{i+1}$ where the last inequality follows from Lemma \ref{lem: change diameter cluster} and $d(u, w)\leq r_i$. Therefore, we can bound the length of the directory path from level $-1$ to level $i$ after an edge failure occurred, as soon as at least one move operation issued after the failure reached level $i$.


\section{Proofs Subsection \ref{subsec: Analysis with Faults}}\label{sec: proofs}
    
    \lookupstrong*
    
    \begin{proof}
        Assume that the lookup operation found the directory path at level $i$ and the directory path up to level $i$ adheres to the bound of Lemma \ref{lem: length directory path after failure}. Our analysis is identical as the analysis in Lemma \ref{lem: cost lookup without failure}. When we sum the cost of the search up to level $i$, we need to account for the increase in the cluster diameter and the number of clusters that need to be contacted at each level (Lemma \ref{lem: change diameter cluster} and Lemma \ref{lem: number cluster to recluster}). When node $u$ contacts a cluster leader node $l(X)$ at level $i$, the distance between $u$ and $l(X)$ is at most $d(u, l(X))\leq r_i + 2\sigma r_i\leq c\sigma r_i$ for some $c\geq 1 + \sigma$. We can thus bound the cost of the search by summing over all clusters that need to be contacted over all levels:
        \begin{align*}
            \text{cost search up to level $i'$} \leq \frac{c\sigma (I+f)(c' \sigma \rho^{i+1} -1)}{\rho -1}.
        \end{align*}
        Therefore, the upward phase of the lookup operation using a strong sparse partition has cost at most $\mathcal{O}(\sigma^2 (I+f)\rho^i)$. The cost of the downward phase is given by Lemma \ref{lem: length directory path after failure}. Summing the two costs, we see that the cost of the entire lookup operation is $\mathcal{O}(\sigma^2 (I+f)\rho^i)$ when using a strong sparse partition. Because the lookup operation found the directory path at level $i$ it must be that the optimal cost is at least $\sigma\rho^{i-1}$ as explained in Lemma \ref{lem: cost lookup without failure}. Therefore, the cost of the lookup operation is within a factor of  $\mathcal{O}(\sigma^2 (I+f)\rho)$ from optimal for strong sparse partitions.
    \end{proof}
    
    \lookupweak*
    
    \begin{proof}
        The proof is identical to Theorem \ref{lem: lookup with faults strong}, except that after $f$ edge failures $P_i(u)\leq fI$, according to Lemma \ref{lem: number cluster to recluster}. Thus, the lookup will need to visit up to $fI$ clusters on each level.
    \end{proof}
    
    \movestrong*
    
    \begin{proof}
        We use the same notation as in Lemma \ref{lem: cost move without failure}, with the exception that for each $i$, we define $s_{i_0}$ to be the last move operation prior to $S$ that reached level $i$. If no such operation exists $s_{i_0}$ is the initial publish operation otherwise. We note that $s_{i_0}$ can occur before or concurrently with the last edge failure. The bound for the optimal cost of serving the request in $S$ is identical to Lemma \ref{lem: cost move without failure}.
        
        To analyze the cost of our algorithm, we need to be a little bit more careful because operation $s_{i_1}$ could be a transient operation for $0\leq i\leq h'$. For the upward phase of the move operation the cost of a transient and a normal operation is identical. The cost analysis of the upward phase is hence identical to Lemma \ref{lem: cost move without failure} except we need to adjust for the change in diameter and the number of clusters that need to be contacted. We obtain $C(S_i)\leq|S_i|c\sigma r_i((I+f)+\sigma)$ for some constant $c$.
        
        For the downward phase, a transient operation might encounter two consecutive nodes $\phi_k$ and $\phi_{k-1}$ on the directory path with worst case distance $d(\phi_k, \phi_{k-1}) = \diam(G') = D'$, where $G'$ is the graph we obtain after the $f$ edge failures. However, for each level only the very first move operation that traverses the directory path downward can encounter this cost, as subsequent operations will be traversing down the updated directory path. For normal move operations, we can bound the downward phase by the length of the upward phase. Hence, we have
        \begin{equation}\label{eqn: cost alg strong}
            C(S) \leq h'D' + 2 \sum_{i=0}^{h'}|S_i|c\sigma r_i((I+f)+\sigma).
        \end{equation}
       
        From Equations \ref{eqn:lower_cost} and \ref{eqn: cost alg strong}, and since $r_i / r_{i-1} \leq \rho$, we get the competitive ratio for the move operations in $S$. For a strong sparse partition we have 
        \begin{align*}\label{eqn: competitive ratio strong}
            \frac{C(S)}{C^*(S)} \leq \frac{h'(h'+1) D'}{\sum_{i=0}^{h'}|S_i|r_{i-1}} +\frac{2(h'+1)\sum_{i=1}^{h'}|S_i|c\sigma r_i((I+f)\sigma)}{\sum_{i=0}^{h'}|S_i|r_{i-1}}
            = \mathcal{O}(h'\sigma\rho((I+f) + \sigma)),
        \end{align*}
        where we assume the second term to be the dominating one, which holds for a sufficiently large set of move operations $S$ (namely, $|S| = \Omega( h'^2 D')$).
    \end{proof}

    \moveweak*
    \begin{proof}
        The proof is identical to Theorem \ref{lem: cost move with failures strong}, except $|P_{k}(u_{i_j})|\leq fI$ in a weak sparse partition.\qedhere
    \end{proof}


\section{Analysis of Algorithm Without Faults}\label{sec: Analysis without faults}
    
    \subsection{Lookup}
        \begin{lemma}\label{lem: cost lookup without failure}
            A lookup operation finds the token with cost that is a $O(\sigma^2 \rho I)$ factor from optimal.
        \end{lemma}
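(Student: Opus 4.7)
The plan is to decompose the lookup cost into two phases and bound each separately, then divide by a carefully constructed lower bound on the optimal cost $d^* = d(v,u)$, where $v$ is the issuer and $u$ is the current owner.

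First, I would let $j$ denote the lowest level at which the lookup discovers $\phi$, either directly (some leader in $P_j(v)$ is a node on $\phi$) or via a special-parent pointer (some leader in $P_j(v)$ is the special parent, at level $j$, of some $\phi_k$ with $k = j - \lceil\log_\rho(c'\sigma)\rceil$). The lookup consists of an upward search phase covering levels $0,1,\ldots,j$ and a downward phase that follows $\phi$ from the discovered point down to $u$.

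For the upward phase, I would invoke Lemma \ref{lem: lookup path} at each level: the traversal at level $i$ costs $O(\sigma I r_i)$. Summing a geometric series in $r_i = \rho^i$ gives an upward-phase cost of $O(\sigma I r_j)$. For the downward phase, whether discovery happened at level $j$ directly or via a special parent that links down to $\phi_k$, Lemma \ref{lem: Length directory path} bounds the length of $\phi$ from $\phi_{-1}$ up to level $j$ by $O(\sigma r_j)$; I would also absorb the short $O(\sigma r_j)$ cost of following a special-parent link into this bound. The total lookup cost is therefore $O(\sigma I r_j)$.

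The main obstacle is the matching lower bound $d^* = \Omega(r_j/(\rho\sigma))$. Here I would use the special-parent construction. Assume toward contradiction that $d^* < r_{j-1}/(c''\sigma)$ for a suitable constant $c''$. Let $w$ be the node whose publish/move operation most recently added $\phi_k$ to the directory path at level $k = j-1-\lceil\log_\rho(c'\sigma)\rceil$; by construction $l_{j-1}(w)$ is registered as the special parent of $\phi_k$. Using Lemma \ref{lem: Length directory path} one bounds $d(u,\phi_k) = O(\sigma r_k)$, and since $w$ and $\phi_k$ share a level-$k$ cluster, $d(w,\phi_k) \leq \sigma r_k$, hence $d(u,w) = O(\sigma r_k) = O(r_{j-1}/c')$. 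The triangle inequality then gives $d(v,w) \leq d^* + d(u,w) \leq r_{j-1}$ for an appropriate choice of $c'$ relative to $c''$, so $w \in N_{G,v}(r_{j-1})$ and therefore $l_{j-1}(w) \in P_{j-1}(v)$. But then the lookup would already have discovered $\phi$ through this special parent at level $j-1$, contradicting the minimality of $j$. Hence $d^* = \Omega(r_{j-1}/\sigma) = \Omega(r_j/(\rho\sigma))$.

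Dividing the two bounds gives the approximation factor $O(\sigma I r_j)/\Omega(r_j/(\rho\sigma)) = O(\sigma^2 \rho I)$. The delicate step, and the one I would take most care with, is the special-parent lower bound: both the edge cases (when $k < -1$, so no such $\phi_k$ exists, which must be handled by a separate direct-containment argument using $u$ itself at a low level) and the constant juggling between $c'$ (the special-parent offset) and $c''$ (the lower-bound constant) need to be set consistently so the contradiction goes through.
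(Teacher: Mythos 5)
Your proof is correct and relies on the same key ingredients as the paper's: the per-level search cost from Lemma~\ref{lem: lookup path}, the directory-path length bound from Lemma~\ref{lem: Length directory path}, and the special-parent mechanism to tie the discovery level to the issuer--owner distance. The only real difference is direction: the paper fixes $i$ by $\rho^{i-1}\le d^*\le\rho^i$ and argues forward that the lookup meets the special parent of $\phi_i$ by level $i+\log_\rho(c'\sigma)$ (so the lower bound on optimal is immediate), whereas you fix the actual discovery level $j$ and derive $d^*=\Omega(r_j/(\rho\sigma))$ by contradiction --- two contrapositive phrasings of the same special-parent triangle-inequality step.
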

        
        \begin{proof}
             Suppose node $u$ issues a lookup request and the directory path points toward node $v$ that holds the token, where $u \neq v$, and $\rho^{i-1} \leq d_G(u, v) \leq \rho^i$.
                  
            Let $w$ be the leader node of the directory path at level $i$. From Lemma \ref{lem: Length directory path}, the length of the directory path from $v$ up to  node $w$ is $c_1 \sigma \rho^i$, for some constant $c_1$. Hence, $d_G(v, w) \leq c_1 \sigma \rho^i$. Therefore, 
            $$d_G(u, w) 
                \leq d_G(u, v) + d_G(v, w) 
                \leq \rho^i + c_1 \sigma \rho^i
                \leq \rho^i \sigma + c_1 \sigma \rho^i 
                \leq (1 + c_1) \sigma \rho^i
                \leq c_2 \sigma \rho^i,$$
            for some constant $c_2 \geq 1 + c_1$.
           
            Let $s_w$ be the special parent of $w$, which is the leader of the cluster that includes $w$ at level  $i' = i +  \log_{\rho} (c' \sigma)$, where the constant $c'$ is such that $c' \geq c_2$. Since $r_{i'} = \min\{D, \rho^{i' }\}$ and $\rho^{i'} = \rho^{i + \log_{\rho} (c' \sigma)} = c' \sigma \rho^i $, the node $s_w$ is in the $r_{i'}$-neighborhood of node $u$. Therefore, when the lookup operation reaches level $i'$ it is guaranteed to discover the special parent $s_w$. The special parent provides a link to the directory node $w$.
           
            From Lemma \ref{lem: lookup path}, for some constant $c_3$, the cost of the upward part of the lookup operation from $u$ until reaching level $i'$ is 
            \begin{align*}
                \text{cost up to level $i'$}  \leq \sum_{j=0}^{i'} c_3 r_j \sigma I
                \leq \sum_{j=0}^{i'}  c_3 \rho^j \sigma I
                = \frac{c_3 (\rho^{i'+1}-1) \sigma I}{(\rho -1)}
                = \frac{c_3 (c' \sigma \rho^{i+1} -1) \sigma I}{(\rho -1)}
                = \mathcal{O}(\sigma^2 \rho^i I).
            \end{align*}
            The downward traversal cost of the lookup is proportional to the distance between $s_w$ and $w$, and the length of the directory path from $w$ to $v$ (which is at most $c_1 \sigma \rho^i$, as discussed above). Since $w$ and $s_w$ are both in the same level $i'$ cluster, the distance between them is at most $d_G(w, s_w) \leq \rho^{i'} = c' \sigma \rho^i$. Hence the downward lookup cost is at most $c_1 \sigma \rho^i + c' \sigma \rho^i = \mathcal{O}(\sigma \rho^i) $.  Combining the upward and downward cost, the overall lookup cost is  $\mathcal{O}(\sigma^2 \rho^i I)$.
    
            The optimal cost of finding the token is $d_G(u,v) > \rho^{i-1}$. Therefore, the lookup operation cost is within a factor of $\mathcal{O}(\sigma^2 \rho I)$ from the optimal cost.
        \end{proof}
        
    \subsection{Move}
        Consider a sequence of move requests $S = s_1, \ldots, s_q$, that execute in a sequential manner, so that $s_i$ starts only after $s_{i-1}$ completes, where $i > 0$. Let $s_0$ be a publish operation.
        \begin{lemma}\label{lem: cost move without failure}
            The total cost of the move operations in $S$ is a $\mathcal{O}( h \rho \sigma (\sigma + I))$ factor from optimal.
        \end{lemma}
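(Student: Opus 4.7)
The plan is to bucket the requests in $S$ by the highest hierarchy level at which they discover the existing directory path, then separately upper bound the algorithm's per-bucket cost and lower bound the optimum, and finally compare level-by-level.

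For each level $i$, $0 \leq i \leq h$, let $S_i = \{s_{i_1}, s_{i_2}, \dots\}$ be the subsequence of requests in $S$ that find $\phi$ at exactly level $i$; write $u_{i_j}$ for the source node of $s_{i_j}$. To upper bound the cost of a single $s_{i_j}\in S_i$, I would combine two ingredients already proved. The upward search traverses, on each level $k\leq i$, a set of at most $I$ cluster leaders; by Lemma~\ref{lem: lookup path} this costs $\mathcal{O}(r_k\sigma I)$, and summing the geometric series with ratio $\rho>1$ gives $\mathcal{O}(r_i\sigma I)$ for the full upward phase. The downward traversal walks the old directory path from $\phi_i$ to $\phi_{-1}$, which by Lemma~\ref{lem: Length directory path} has length $\mathcal{O}(\sigma r_i)$; the new sub-path built alongside the upward search contributes the same order. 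Thus the cost of serving $s_{i_j}$ is $\mathcal{O}(\sigma r_i(\sigma+I))$, and $C(S_i)\leq |S_i|\cdot c\sigma r_i(\sigma+I)$ for a constant $c$.

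For the lower bound on $C^*(S)$, I would use the standard fact that any sequential serving of $S$ pays at least $\sum_{j\geq 1} d(u_j,u_{j+1})$. The key observation is that if $s_{i_j}$ and $s_{i_{j+1}}$ are consecutive within $S_i$ (not necessarily consecutive in $S$), then $d(u_{i_j},u_{i_{j+1}}) > r_{i-1}$. The justification is an invariant on the directory path: all requests lying strictly between $s_{i_j}$ and $s_{i_{j+1}}$ in $S$ discovered $\phi$ at levels strictly below $i$ and so did not touch $\phi_{i-1}$; hence at the moment $s_{i_{j+1}}$ performs its search, $\phi_{i-1}=l_{i-1}(u_{i_j})$. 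Were $u_{i_{j+1}}$ within the $r_{i-1}$-neighborhood of $u_{i_j}$, the leader $\phi_{i-1}$ would lie in $P_{i-1}(u_{i_{j+1}})$, contradicting the assumption that $s_{i_{j+1}}$ proceeds past level $i-1$. Taking $s_{i_0}$ to be the publish operation (or the latest prior move that reached level $i$) to anchor the first pair, the triangle inequality along the walk serving $S$ gives $C^*(S)\geq |S_i| r_{i-1}$ for every level $i$, whence $(h+2)\,C^*(S)\geq \sum_{i=0}^{h} |S_i| r_{i-1}$.

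Putting the bounds together and using $r_i/r_{i-1}\leq \rho$,
\begin{equation*}
\frac{C(S)}{C^*(S)} \;\leq\; \frac{(h+2)\sum_{i=0}^{h}|S_i|\, c\sigma r_i(\sigma+I)}{\sum_{i=0}^{h}|S_i|\, r_{i-1}} \;=\; \mathcal{O}\bigl(h\,\rho\,\sigma(\sigma+I)\bigr),
\end{equation*}
as required. The main obstacle I expect is making the invariant about $\phi_{i-1}$ between consecutive members of $S_i$ airtight: one must argue that no intervening request of lower level can reset $\phi$ at level $i-1$, and one must correctly handle the boundary case where $s_{i_0}$ is the initial publish (so that the first pairwise distance is likewise charged to the optimum). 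Once this structural claim is pinned down, the remainder is a routine geometric sum and ratio calculation.
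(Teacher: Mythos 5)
Your proposal takes essentially the same route as the paper: bucketing $S$ into $S_i$ by the highest level reached, bounding the per-bucket cost by $|S_i|\cdot \mathcal{O}(\sigma r_i(\sigma+I))$, establishing $d(u_{i_j},u_{i_{j+1}})>r_{i-1}$ for consecutive members of $S_i$ via the invariant on $\phi_{i-1}$, averaging over levels to get $C^*(S)\geq \frac{1}{h+\mathcal{O}(1)}\sum_i |S_i|r_{i-1}$, and closing with $r_i/r_{i-1}\leq\rho$. The structural claim you flag as the main obstacle is exactly the argument the paper spells out, and your handling of the $s_{i_0}$ anchor matches.

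There is one loose end in the per-operation upper bound. You list three pieces: upward search $\mathcal{O}(r_i\sigma I)$, downward traversal of the old path $\mathcal{O}(\sigma r_i)$, and the new sub-path $\mathcal{O}(\sigma r_i)$; these sum to $\mathcal{O}(\sigma r_i I)$ (since $I\geq 1$), yet you then assert the per-operation cost is $\mathcal{O}(\sigma r_i(\sigma+I))$. The $\sigma^2$ term is not accounted for by anything you enumerate. In the paper it arises from a cost you omitted: whenever the move adds a new leader $\phi_k=l_k(v)$ to the directory path, it must also notify the special parent at level $k'=k+\log_\rho(c'\sigma)$ (see the pseudocode and the lookup analysis), and that message travels $\Theta(\sigma r_{k'})=\Theta(\sigma\cdot c'\sigma r_k)=\Theta(\sigma^2 r_k)$; summing over $k\leq i$ gives $\Theta(\sigma^2 r_i)$. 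Your final inequality is nonetheless correct because the bound you wrote down subsumes what you derived, but as stated your argument only justifies the weaker $\mathcal{O}(h\rho\sigma I)$ and gives no explanation for the $\sigma^2$ in the lemma. Adding the special-parent update cost repairs this.
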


        \begin{proof}
            Let $S_i = s_{i_1}, s_{i_2}, \ldots, s_{i_z}$, $i \geq 0$, be the sub-sequence of the move operations in $S$ that reach up to level $i \geq 0$ in their upward phase; namely, these operations modify a directory link at level $i$ to point to level $i-1$. Suppose that $s_{i_0} = s_0$.
                    
            Suppose that move operation $s_{i_j}$ is issued by node $u_{i_j}$. 
            Since operation $s_{i_j}$ reaches level $i$, it forms a directory path $\phi = p_{i_j}$ by linking the leader nodes in the clusters that $u_{i_j}$ participates to up to level $i-1$. Then $p_{i_j}$ links to a leader in layer $i$ that belongs to some cluster $X$ which is in the $r_i$-neighborhood of $u_{i_j}$.  
            
            We continue to show that $d(u_{i_{j-1}}, u_{i_j}) > r_{i-1}$, for $j > 0$. The reason is follows. Between $s_{i_{j-1}} $ and $s_{i_j}$ there is no other operation in $S$ between them that reaches level $i$, and hence when $p_{i_j}$ is formed the only previous directory path that reached level $i$ is $p_{i_{j-1}}$. Since $s_{i_j}$ reaches level $i$, it does not discover $p_{i_{j-1}}$ at level $i-1$. Before $s_{i_j}$, the only request that could have set the leader of the directory path at level $i-1$ is $s_{i_{j-1}}$, since otherwise there would have been another operation between $s_{i_{j-1}}$ and $s_{i_j}$ that reaches level $i$.  This implies that $u_{i_{j-1}}$ is not in the $r_{i-1}$-neighborhood of $u_{i,j}$. Consequently, $d(u_{i_{j-1}}, u_{i_j}) > r_{i-1}$.
            
            Let $C^*(S_i)$ denote the optimal cost of the operations in $S_i$. Since for any two consecutive operations in $S_i$ (including pair $s_{i_0}, s_{i_1}$) the distance of the source nodes is more than $r_{i-1}$, we have that $C^*(S_i) > |S_i| r_{i-1}$. Let $C^*$ be the optimal cost of all the operations in $S$. For the overall optimal cost we have 
            \begin{equation}
            \label{eqn:lower_cost}
                C^*(S)\geq \max_{0 \leq i\leq h } C^*(S_i)
                \geq \frac {\sum_{i=0}^{h} C^*(S_i)} {h+1}
                > \frac {\sum_{i=0}^{h} |S_i| r_{i-1}} {h+1}.
            \end{equation}
            
            Let $C(S_i)$ denote the cost of our directory scheme for serving the move requests in $S_i$ at level $i$ in the upward phase. For an operation $s_{i_j} \in S_i$ originating at $u_{i_j}$ we will count in $C(S_i)$ the cost of the operation $s_{i_j}$ at level $i$ only, as the cost of the operation at the other levels (below or above $i$) will be counted in $C(S_{k})$, $k \neq i$. Similar to Lemma \ref{lem: lookup path}, the move at level $i$ involves $c_1 r_i \sigma I$ cost checking the up to at most $I$ nearby cluster leaders, for a constant $c_1$ (we will also use additional constants $c_2, c_3, c_4$). We also have cost at most $c_2 r_i \sigma + r_i$ for linking the parent at level $i$ to level $i-1$. Moreover, for updating the special parent of $u_{i_j}$ which is at level $i + \log_\rho(c'\sigma)$, the involved cost is $c_3\sigma r_{i+\log_\rho (c' \sigma)}$. Note that $r_{i+\log_\rho (c' \sigma)} \leq r_{i}\rho^{\log_\rho (c' \sigma)} = r_{i} c' \sigma$, and hence the cost of updating the special parent is at most $c_3 c' r_i \sigma^2$. Adding all the above we have $c_1 r_i \sigma I + c_2 r_i \sigma + r_i + c_3 c' r_i \sigma^2 \leq c_4 r_i \sigma (\sigma + I)$, for a constant $c_4$. Therefore, $C(S_i) \leq |S_i| c_4 r_i \sigma(\sigma + I)$.
    
            Let $C(S)$ be the total cost of the move operations. Since a move operation will go through a level twice, once in the upward phase and once in the downward phase, and the downward phase cost does not exceed the upward phase cost, we consider the cost of $C(S_i)$ twice.
            Hence,
    		\begin{equation}\label{eqn:upper_cost}
                C(S) \leq 2 \cdot \sum_{i=0}^{h} C(S_i)
                \leq 2 \cdot \sum_{i=0}^{h} |S_i| c_4 r_i \sigma (\sigma + I).
            \end{equation}
            
            From Equations \ref{eqn:lower_cost} and \ref{eqn:upper_cost}, and since $r_i / r_{i-1} \leq \rho$, we get for the approximation of the total cost of the move operations in $S$:
            
            \begin{equation*}
                \frac{C(S)}{C^*(S)} < 2 (h+1) \frac{\sum_{i=0}^{h} |S_i| c_4 r_i \sigma (\sigma + I)}{\sum_{i=0}^{h} |S_i| r_{i-1}}
                \leq 2 c_4 (h+1) \rho \sigma (\sigma + I) = \mathcal{O}(h \rho \sigma (\sigma +I)).\qedhere
            \end{equation*}
        \end{proof}
    
  
\section{Cost Analysis of Fault Mechanism}\label{sec: cost fault mechanism}
    There is also a cost associated with our edge repair mechanism. In this section, we analyze the overhead cost of the repair encountered by a single edge failure $e=\{u, v\}$.

    \subsection{Cost of Updating the Shortest Path Tree}

        To update the shortest path trees we use the fully dynamic algorithm to maintain shortest path trees developed by King~\cite{king1999fully}. To update a single shortest path tree King's algorithm requires $\mathcal{O}(md)$ time, where $d$ denotes the maximum distance from the root of the tree to any other node, and $m$ denotes the total number of edges in the graph. Therefore, updating all shortest path trees can be done in $\mathcal{O}(mnD')$ time, where $D' = \diam(G\setminus\{e\}$) and $n$ denotes the total number of vertices in $G$.
    
        King's algorithm is a centralized algorithm that mimics Dijkstra. We can implement the same algorithm as a decentralized algorithm where the node root performs the computation of updating the shortest path tree. In this case, there will be an additional cost to inform the root node of the available edges. In particular, the root node would need to be informed about at most $\mathcal{O}(m)$ edges, and the maximum distance of such an edge to the root node is $\mathcal{O}(\diam(G'))$, where $G'$ denotes the current graph on which we are computing the shortest path tree. 
        
        After updating the shortest path trees, we also need to inform the nodes that are incident to edges that changed on the shortest path trees about the changes. That is if an edge $e_1=\{a, b\}$ was part of the shortest path tree before the edge failure but is no longer part of the shortest path tree after the edge failure, then $a$ and $b$ both need to be informed of the update. Similarly, if an edge $e_2=\{c, d\}$ was not part of the shortest path tree prior to the failure, but is part of the tree after the failure, then both $c$ and $d$ need to be informed of the failure.
        
        \begin{lemma}
           To inform all nodes about the updates of shortest path trees that affect their incident edges requires $\mathcal{O}(n^2)$ messages. Each message has size $\mathcal{O}(\log(n))$ and the total distance that a message needs to travel is $\diam(G')$, where $G'$ is the current graph on which we are computing the shortest path trees.
        \end{lemma}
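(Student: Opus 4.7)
The plan is to bound the three quantities separately by counting, over all $n$ shortest path trees, the edges that can change status (added to or removed from a tree) and then accounting for the cost of one announcement per changed edge.

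First I would fix a shortest path tree $T(w)$ maintained at some node $w$. Since $T(w)$ is a spanning tree of $G$ (or of $G \setminus \{e\}$ after the failure), it contains exactly $n-1$ edges both before and after King's update procedure runs. Hence the symmetric difference between the old and new edge sets of $T(w)$ has size at most $2(n-1) = \mathcal{O}(n)$: each edge is either newly added, newly removed, or unchanged. Summing over all $n$ roots $w \in V$ gives a global bound of $\mathcal{O}(n^2)$ on the total number of tree-edge status changes that must be announced. For every such change, root $w$ sends one notification to each of the two endpoints of the affected edge, which keeps the total message count in $\mathcal{O}(n^2)$.

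Second, I would bound the size of a single notification. A message only needs to carry the identifier of the root $w$, the identifiers of the two endpoints $a,b$ of the changed edge, and a single bit indicating whether the edge has been inserted into or removed from $T(w)$. Since node identifiers fit in $\mathcal{O}(\log n)$ bits, each message has size $\mathcal{O}(\log n)$ as claimed.

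Third, the distance each message travels: the notification originates at $w$ and is delivered along the (updated) shortest path in $G'$ to one of the endpoints $a$ or $b$. Its length is therefore $d_{G'}(w,a)$ or $d_{G'}(w,b)$, which is at most $\diam(G')$. This gives the per-message cost bound.

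The main obstacle, conceptually, is to convince oneself that the bound $\mathcal{O}(n)$ changes \emph{per tree} is indeed correct despite King's algorithm possibly visiting more edges internally; the key observation is that the external notifications we must send correspond only to the symmetric difference of the old and new tree edge sets, not to every edge examined during the rebuild. Once that is isolated, the message count, size, and travel distance follow directly.
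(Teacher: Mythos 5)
Your proposal is correct and follows essentially the same approach as the paper: bounding the number of changed tree edges by $\mathcal{O}(n)$ per tree and summing over the $n$ trees, noting $\mathcal{O}(\log n)$ bits suffice to encode the edge identifier, and bounding the travel distance by the diameter of the updated graph. Your symmetric-difference argument makes the $\mathcal{O}(n)$-per-tree bound slightly more explicit, but the structure of the argument is the same.
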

        
        \begin{proof}
            For each tree that is changed, at most $\mathcal{O}(n)$ edges changed. Therefore, we need to send at most $\mathcal{O}(n)$ messages for each tree. Each message specifies information about a particular edge. Encoding the id of the edge requires $\mathcal{O}(\log(n))$ bits. The messages are sent along the newly computed shortest path trees. Therefore, the maximum distance a message needs to be sent is $\diam(G')$.
        \end{proof}

    \subsection{Cost of Reclustering}
        \begin{lemma}\label{lem: cost reclustering}
            To recluster a single level $i$ cluster in a strong sparse partition we need to send one message of size at most $\mathcal{O}(\log(n))$. This message will travel a distance of at most $\sigma r_i$. If the cluster is part of a weak sparse partition then the reclustering requires an additional message of size $\mathcal{O}(n\log(n))$ that needs to be sent a distance of at most $\sigma r_i$.
            Furthermore to inform all nodes in $X_2$ of the reclustering we need to send a total of $\mathcal{O}(n)$ messages of size $\mathcal{O}(\log(n))$. In a strong sparse partition, these messages need to travel a distance of at most $\sigma r_i$, and a distance of $2\sigma r_i$ in a weak sparse partition. 
        \end{lemma}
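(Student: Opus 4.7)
The plan is to verify the three claims by walking through the reclustering procedure described just above the lemma, and bounding the size, number, and cost of each message it generates. I will treat the initialization message, the additional weak-partition message, and the broadcast to $X_2$ separately, using the fact that both endpoints of the failed edge $e = \{u,v\}$ detect the failure and lie on $T(X)$.

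For the initialization message, node $u$ sends a single notification toward $l(X)$ along $T(X)$. The payload only needs to carry the identifier of the failed edge, so the size is $\mathcal{O}(\log n)$. In the strong case, both $u,v \in X$, so by the spanning-tree property every node of $X$ is at tree-distance at most $\diam(X) \leq \sigma r_i$ from $l(X)$; hence this message travels distance at most $\sigma r_i$. In the weak case, the extra step is triggered after $v$ learns it must pick a new leader: $v$ selects the node $w \in X \cap T_{\setminus v}(X)$ closest to it on $T(X)$, and sends a message from $v$ to $w$ along $T(X)$. Each intermediate node $x$ appends its local knowledge of $T_{\setminus x}(X)$ (and the membership flags indicating which nodes of that subtree belong to $X$). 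Since $T_{\setminus v}(X)$ has at most $n-1$ edges, each described by $\mathcal{O}(\log n)$ bits (endpoint ids plus a membership bit), the accumulated message has size $\mathcal{O}(n\log n)$. The travel distance is $d_{T(X)}(v,w)$; since $w \in X$ lies in $T_{\setminus v}(X)$, the $v\to w$ subpath of $T(X)$ is contained in the $v$-rooted subtree and is bounded by $\sigma r_i$, which follows from the bound on $d_{T(X)}(x, l(X))$ for $x\in X$.

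For the broadcast informing $X_2$, the new leader $l(X_2)$ sends one message along each edge of the new spanning tree $T(X_2)$ toward every node of $X_2$. Since $|X_2|\le n$, this is $\mathcal{O}(n)$ messages, each carrying only the new leader id and cluster update info, so $\mathcal{O}(\log n)$ bits. For the distance bound in the strong case, $l(X_2) = v$ and $T(X_2) = T_{\setminus v}(X)$ rerooted at $v$. For any $x\in X_2$, the tree path in $T(X)$ from $x$ to $l(X)$ passes through $v$, so $d_{T(X)}(x,v)\le d_{T(X)}(x,l(X))\le \sigma r_i$, giving a broadcast distance of at most $\sigma r_i$. In the weak case $l(X_2)=w$, and the broadcast traverses $T(X_2)$, which is $T_{\setminus v}(X)$ rerooted at $w$ (via $v$). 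The triangle inequality on $T(X)$ then yields $d_{T(X_2)}(x,w)\le d_{T(X)}(x,v)+d_{T(X)}(v,w)\le \sigma r_i+\sigma r_i = 2\sigma r_i$.

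The main obstacle is the message-size accounting in the weak case: I need to argue that the information accumulated on the $v\to w$ leg of the spanning tree is indeed $\mathcal{O}(n\log n)$ and no more, and that it suffices for $w$ together with each node on the path to reconstruct $T(X_2)$ rerooted at $w$. This reduces to noting that each node $x$ on the path contributes only the edges and membership labels of $T_{\setminus x}(X)$ that strictly extend what the previous node contributed, so the cumulative payload is proportional to the edge set of $T_{\setminus v}(X)$ rather than, say, a per-node retransmission. The remaining distance bounds are straightforward tree-path calculations using Lemma~\ref{lem: change diameter cluster} together with the spanning-tree diameter guarantee from the model section.
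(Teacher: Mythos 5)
Your proof is correct and follows essentially the same route as the paper's: trace the three message phases (initialization toward $l(X)$, the weak-case leadership hand-off to $w$, and the broadcast through $X_2$) and bound each one's size, count, and tree-path length using the spanning-tree diameter guarantee. The one substantive difference is in the weak-case broadcast distance, where you rederive the $2\sigma r_i$ bound via a direct triangle inequality on $T(X)$ through $v$, whereas the paper simply invokes Lemma~\ref{lem: change diameter cluster}; both are valid and the underlying geometry is identical (your $v$-to-$w$ bound rests on the same observation the paper makes explicit, namely that $v$ lies on the $T(X)$-path from $w$ to $l(X)$, so $d_{T(X)}(v,w)\le d_{T(X)}(w,l(X))\le\sigma r_i$). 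The only item the paper adds that you omit is the closing remark about splitting the root-level cluster, where the root must additionally broadcast the new layer count at cost $\sigma\rho^h$; this is a side observation rather than part of the quantities the lemma statement actually bounds, so its absence does not weaken your argument, but you may want to note it for completeness if the surrounding text relies on it.
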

    
        \begin{proof}
            Let $X$ be a level $i$ cluster that needs to be reclustered due to a failure of edge $e=\{u, v\}$. Without loss of generality, assume $d(u, l(x)) < d(v, l(X))$. Then node $u$ sends a message to $l(X)$ that informs $l(X)$ and all nodes on the path connection $u$ and $l(X)$ of the change in the cluster. The distance between $u$ and $l(X)$ is at most $\diam(X)\leq \sigma r_i$. As every node $x$ on the path from $u$ to $l(X)$ knows $T_{x}(X)$, sending the id of $u$ along the path from $u$ to $l(X)$ on $T(X)$ suffices so that every node can update their knowledge of $T(X_1)$. Within $X_1$ no further messages need to be sent. 
            
            To update $X_2$, we differentiate between the cases $v\in X_2$ and $v\not\in X_2$. If $v\in X_2$, then $v$ will become the leader of $X_2$ and there is no need for $v$ to send a message to any other node. If $v\not\in X_2$, then it chooses some node $w$ in $X\cap T_v(X)$ to become the new leader node. In this case, $v$ will send a message $m$ to $w$ to inform it about its new leadership role. Message $m$ is sent along $T(X)$ and every node $x$ on the path from $v$ to $w$ will include information on $T_x(X)$. The reason for including this information is that our spanning tree for $T(X_2)$ is $T_v(X)$ rerooted at $w$. Because node $w$ is in $T_v(X)$, it must be that $v$ is on the path from $w$ to $l(X)$ on $T(X)$. Therefore, $d(v, w) \leq d(l(X), w)\leq \sigma r_i$. 
            
            We require that every node $y$ on $T(X_2)$ knows that it is part of $T(X_2)$ and that it knows its subtree i.e., $T_y(X_2)$. Because we are rerooting the tree at $w$, we need to inform it of the additional descendants it gets through the rerooting. If $n_2$ denotes the number of node in $X_2$, then $T(X_2)$ contains $n_2-1$ edges each of which can be decoded by its two endpoints, which requires $\mathcal{O}(\log(n))$ bits, where $n$ denotes the total number of nodes in $G$. Therefore in the worst case message $m$ has size $\mathcal{O}(n\log(n))$. 
            
            Once $l(X_2)$ is informed of the cluster splitting and whether or not it is part of the directory path, node $l(X_2)$ sends a broadcast message to all nodes in $X_2$ to inform them about the reclustering. This message includes the id of $l(X_2)$ which requires $\mathcal{O}(\log(n))$ bits. The total number of nodes in $X_2$ is $\mathcal{O}(n)$. In a strong sparse partition, the leader node of $X_2$ is node $v$. It must be that for all nodes in $X_2$, node $v$ is on the path to $l(X)$ on $T(X)$ prior to the failure of $e$. Therefore the distance from $v$ to any node $w$ in $X_2$ is smaller than $\sigma r_i$, the distance from $l(X)$ to $w$ on $T(X)$ before the failure of $e$. In a weak sparse partition, we know from Lemma \ref{lem: change diameter cluster} that the diameter of $X_2$ is at most $2\sigma r_i$.

            In case we are splitting the root level cluster, the root needs to inform all nodes in $X_1$ about the additional layers. This can also be achieved with $\mathcal{O}(n)$ messages of cost $\sigma \rho^h$. 
        \end{proof}
                
    \subsection{Cost of Updating the Directory Path and the Special Parent Information}
        Let $X$ be a level $i$ cluster that splits into $X_1$ and $X_2$ due to an edge failure of edge $e=\{u, v\}$.
 
        \begin{lemma}\label{lem: cost update directory path}
            To update the directory path the total distance that a message needs to travel is at most $\diam(G\setminus\{e\})$. The maximum size of a message is $\mathcal{O}(\log(n))$ and the total number of messages that we need to send is constant.
        \end{lemma}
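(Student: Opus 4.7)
The plan is to directly audit the update procedure described in Section~\ref{subsec: updating directory path} (depicted in Figure~\ref{fig: updating the directory path}), checking in turn the number of messages exchanged, the information each one carries, and the distance each one must travel.

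First, I would enumerate the messages. Step~1 sends two notifications from $\phi_i$ to its directory-path neighbors $\phi_{i-1}$ and $\phi_{i+1}$. Step~2 sends one message from $\phi_i$ to the new leader $l(X_2)$ (routed through $v$ in the weak-partition case, which still counts as a single logical message). Step~3 contributes two further messages, from $l(X_2)$ to each of $\phi_{i-1}$ and $\phi_{i+1}$. Step~4 produces two acknowledgements back to $\phi_i$ so that it can drop its old pointers. The accompanying special-parent update adds at most two further messages: $l(X_1)$ may have to instruct its level-$i'$ special parent to drop the old link, and $l(X_2)$ may have to tell its own level-$i'$ special parent to register the new one. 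The total is clearly $\mathcal{O}(1)$, independent of $n$ and $f$.

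Next I would bound the message size. Each message carries at most a small constant number of node identifiers -- the id of $l(X_2)$, the ids of $\phi_{i-1}$ and $\phi_{i+1}$, the id of the node that originally added $l(X)$ to $\phi$ -- together with a short operation tag. Since every identifier fits in $\mathcal{O}(\log n)$ bits, each message has size $\mathcal{O}(\log n)$.

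For the distance bound, every message is sent between two nodes of $V$ and routed along a shortest path in $G \setminus \{e\}$, using the shortest-path trees already rebuilt in Section~\ref{subsec: upadating shortest path trees}. Any two nodes lie within $\diam(G \setminus \{e\}) = D'$ of one another in the post-failure graph, so each message travels at most $D'$. The part I expect to require the most care is the special-parent bookkeeping: as flagged in the transient-operations discussion, $C_{i'}(l(X))$ and $C_{i'}(l(X_2))$ need not coincide, so one must verify that only the two endpoints $l(X_1)$ and $l(X_2)$ ever change their status at level $i$, which in turn triggers at most two special-parent notifications -- preserving the constant-message claim.
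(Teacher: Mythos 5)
Your proof is correct and follows essentially the same approach as the paper: walk through the steps of the update procedure in Figure~\ref{fig: updating the directory path}, count a constant number of messages, observe each carries $\mathcal{O}(1)$ node ids and hence $\mathcal{O}(\log n)$ bits, and bound every hop by $\diam(G\setminus\{e\})$. The only structural difference is that you fold the special-parent notifications into this lemma, whereas the paper defers them to the following Lemma~\ref{lem: cost update special parent}, which gives the sharper per-message distance bound $2\sigma r_{i'}$ for those two messages (since sender and special parent share a level-$i'$ cluster); including them here with the looser $D'$ bound is harmless for the stated claim.
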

        
        \begin{proof}
            When a level $i$ cluster $X$ splits into clusters $X_1$ and $X_2$ due to an edge failure, then $l(X)$ needs to inform $X_2$ if it is part of the directory path. In case $X_2$ is not part of the directory path node $l(X)$ simply sends a message to node $v$ informing it that cluster $X_2$ is not part of the directory path. In case node $v$ is not the leader node of cluster $X_2$ it forwards this information to $l(X_2)$. Because $l(X)$ and $v$ are no longer part of the same cluster after the edge failure, the only bound we have on the distance between $l(X)$ and $v$ is $d(l(X), v)\leq \diam(G\setminus\{e\})$. The distance between $v$ and $l(X_2)$ can be bounded by the diameter of $X_2$, that is $d(l(X_2), v) \leq 2\sigma r_i$. When $X_2$ is not part of the directory path, we do not need to update any special parent information.
        
            In the case that $X_2$ is part of the directory path, then $l(X)$ will first send a message to $\phi_{i+1}$ and $\phi_{i-1}$ to inform them of the update that is about to happen. This will block $\phi_{i+1}$ and $\phi_{i-1}$ from initializing a directory path update until the level $i$ update is completed. Due to the edge failure, the distance between consecutive leader nodes on a directory path that was formed before the edge failure can only be bounded by $\diam(G\setminus\{e\})$. In a second step, node $l(X)$ sends a message to $v$, this time informing $v$ that $l(X_2)$ is part of the directory path. In this case, the message also includes information about $\phi_{i+1}$ and $\phi_{i-1}$. This information can be encoded in $\mathcal{O}(\log(n))$ bits, where $n$ denotes the total number of nodes in the graph. In case $v$ is not $X_2$'s leader node, $v$ will again forward this message to $l(X_2)$. Node $l(X_2)$ will set pointers to $\phi_{i+1}$ and $\phi_{i-1}$, as well as send a message to these nodes, so they can also update their pointers. The distance between $l(X_2)$ and $\phi_{i+1}$, respectively between $l(X_2)$ and $\phi_{i-1}$ is at most $\diam(G\setminus\{e\})$. When $\phi_{i+1}$ and $\phi_{i-1}$ receive $l(X_2)$'s message, they update their pointer and send a message to $l(X)$ so that it removes its pointers.
        \end{proof}
        
        In case cluster $X_2$ becomes part of the directory path due to the edge failure, or if we add additional layers to the hierarchy we also need to update the special parent information. 

        \begin{lemma}\label{lem: cost update special parent}
            To update the special parent information we need to send two messages of constant size, and each message needs to travel a distance of at most $2\sigma r_{i'}$, where $i' = i + \log_\rho(c'\sigma)$.
        \end{lemma}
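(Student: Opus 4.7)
The plan is to count the messages involved in the special parent bookkeeping and then bound each of their travel distances by the diameter of a level-$i'$ cluster after failures.

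First, I would identify which leader nodes need to contact a special parent when cluster $X$ at level $i$ splits into $X_1$ and $X_2$. By the protocol in Subsection \ref{subsec: updating directory path}, a leader updates its special parent exactly when its own directory-path status changes. Thus only two leaders are involved: $l(X_1)$, which was previously on $\phi$ and must instruct its level-$i'$ leader to drop the shortcut pointer to $l(X_1)$, and $l(X_2)$, which has just been added to $\phi$ and must instruct its level-$i'$ leader to install a shortcut pointer. This accounts for exactly two messages. Each message carries only a constant amount of payload (an add/remove flag together with the sender's identifier), hence its size is $\mathcal{O}(\log n)$ bits, which the paper treats as constant-size for this purpose.

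Next I would bound the travel distance. Each message goes from a level-$i$ leader $l(X_j)$ to its own level-$i'$ cluster leader, i.e.\ to $l_{i'}(l(X_j))$. Since $l(X_j)$ and $l_{i'}(l(X_j))$ lie in a common level-$i'$ cluster, their distance is at most the diameter of that cluster. By Lemma \ref{lem: change diameter cluster}, the diameter of any level-$i'$ cluster after any number of edge failures is at most $2\sigma r_{i'}$. This gives the bound of $2\sigma r_{i'}$ on the travel distance of each of the two messages.

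The only subtle point — and the step I would treat as the main obstacle — is justifying that $l(X_2)$ really does lie in the same level-$i'$ cluster as its special parent at the moment it sends its message, even though $X$ has just split. This follows because the partitions at higher levels are defined on node sets and the splitting of $X$ at level $i$ does not itself remove $l(X_2)$ from its level-$i'$ cluster; the level-$i'$ cluster containing $l(X_2)$ may eventually also need to split (if the failed edge lies on its spanning tree), but that is handled by a separate invocation of the reclustering mechanism and only affects the diameter up to the same bound $2\sigma r_{i'}$ given by Lemma \ref{lem: change diameter cluster}. So the claimed bound on distance holds regardless of the ordering of the two updates, completing the proof.
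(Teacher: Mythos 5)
Your proof is correct and follows essentially the same route as the paper: identify the two messages (one from $l(X_1)=l(X)$ to its special parent, one from $l(X_2)$ to its special parent), and bound each travel distance by the post-failure level-$i'$ cluster diameter via Lemma~\ref{lem: change diameter cluster}. The extra paragraph addressing whether $l(X_2)$ and its special parent remain co-clustered at level $i'$ is a reasonable sanity check that the paper leaves implicit, but it does not change the underlying argument.
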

        
        \begin{proof}
            To update the special parent information nodes $l(X)$  needs to send a message to $l_{i'}(l(X))$ to inform it that $l(X)$ is no longer part of the directory path, and $l(X_2)$ needs to send a message to $l_{i'}(l(X_2))$ to inform it that $l(X_2)$ is part of the directory path. By Lemma \ref{lem: change diameter cluster}, the diameter of a cluster at level $i'$ is at most $2\sigma r_{i'}$, thus both of these messages need to traverse a distance of at most $2\sigma r_{i'}$, because $l(X)$ and $l_{i'}(l(X))$ are both in $C_{i'}(l(X))$ and $l(X_2)$ and $l_{i'}(l(X_2))$ are both in $C_{i'}(l(X_2))$.
        \end{proof}

    \subsection{Cost of Updating the Preprocessing Information}\label{subsec: cost update}
        Let $X$ be a level $i$ cluster that splits into clusters $X_1$ and $X_2$ due to an edge failure. Then all nodes in $X_2$ need to inform their $r_i$-neighborhood about the cluster change. 
        
        \begin{lemma}\label{lem: cost update preprocessing info}
            To inform the $r_i$ neighborhood about the cluster change of the nodes in $X_2$ we need to send at most $n^2$ messages. Each message has size $\mathcal{O}(\log(n))$ and needs to travel a distance of at most $r_i$.
        \end{lemma}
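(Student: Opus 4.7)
The plan is to bound each of the three quantities (number of messages, message size, message travel distance) separately, using only elementary counting and the definition of the $r_i$-neighborhood.

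First I would bound the number of messages. The set $X_2$ contains at most $n$ nodes (since $|V| = n$). Each node $x \in X_2$ must send an update to every node in its $r_i$-neighborhood $N_{G,x}(r_i)$, and this neighborhood has size at most $n$. Multiplying gives at most $n \cdot n = n^2$ messages in total. This is a loose bound but matches what the lemma claims.

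Next I would argue about message size. The content that needs to be communicated is just that the cluster membership of $x$ at level $i$ has changed, and the identifier of its new leader $l(X_2)$. Encoding a node identifier requires $\mathcal{O}(\log n)$ bits, and a constant amount of additional information (the level $i$, which is also $\mathcal{O}(\log n)$ since $h = \mathcal{O}(\log n)$ or $h' = \mathcal{O}(\log D')$) fits in the same asymptotic bound. Hence each message has size $\mathcal{O}(\log n)$.

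Finally I would bound the travel distance. By construction, $x$ only sends the update to nodes in $N_{G,x}(r_i)$, so by definition of the $r_i$-neighborhood each recipient is within graph distance at most $r_i$ of $x$. Since messages are sent along shortest paths using the stored shortest path tree $T(x)$, the cost of delivery is at most $r_i$. Combining the three bounds gives the claim. The proof is essentially bookkeeping; there is no real obstacle here beyond making sure we correctly interpret ``$r_i$-neighborhood'' (which has already been defined in the model) and that we charge each transmission to a single (sender, receiver) pair in $X_2 \times V$.
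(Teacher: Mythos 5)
Your proof is correct and mirrors the paper's argument essentially verbatim: bound the sender set by $n$, the per-sender recipient count by $n$, note the message content is a node id of $\mathcal{O}(\log n)$ bits, and observe that shortest-path delivery within the $r_i$-neighborhood costs at most $r_i$.
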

        
        \begin{proof}
            Each node in $X_2$ needs to send a message to every node in its $r_i$-neighborhood. In the worst case, $|X_2|=\mathcal{O}(n)$, and for every node in $X_2$, the $r_i$-neighborhood has size $\mathcal{O}(n)$. In this case, a total of $n^2$ messages need to be sent. Each message contains the id of $l(X_2)$ which requires $\mathcal{O}(\log(n))$ bits. As a node only needs a message to the nodes in its $r_i$-neighborhood, and because we send messages along shortest paths no message needs to travel further than $r_i$.
        \end{proof}


\section{Pseudocode of Basic Directory Algorithm}\label{sec: pesudocode}

    \begin{algorithm}[H]
        \small
        \caption{Directory Operations Issued by Node $v$} 
        Graph $G$ has partition hierarchy $\mathcal{P}$ with topmost level $h = \lceil \log_\rho{D} \rceil$, for constant $\rho > 1$\;
        Directory path $\phi = \phi_{-1}, \phi_{0}, \ldots, \phi_{h}$ points toward the current owner of token $t$\;  
        
        \BlankLine\BlankLine
        
        \tcp{\bf Publish Operation}
        {$\phi_{-1} \gets l_{-1}(v)$; }
        \For{level $i$ from $0$ to $h$}
        {$\phi_i \gets l_i(v)$; \tcp{$\phi_i$ is set to be the leader of $v$ at level $i$}
        Add bidirectional links between $\phi_i$ and $\phi_{i-1}$\;}
        \BlankLine\BlankLine
        
        \tcp{\bf Lookup Operation}
        
        $i \gets 0$; \tcp{start level of upward phase}
        \While{none of the leaders of clusters in $P_i{v}$ know about $\phi$}
        {i++; \tcp{upward phase to discover $\phi$}}
        
        If a special parent pointer toward $\phi_{i'}$ ($i' < i$) 
        was discovered at level $i$, then adjust $i \gets i'$\; 
        
        \tcp{downward phase toward token}
        \For{level $k \gets i$ down to $0$}
        {Follow the downward pointer of $\phi_k$\; }
        
        Return value of token $t$ from owner node $\phi_{-1}$\;
        
        \BlankLine\BlankLine
        
        \tcp{\bf Move Operation}
        
        $\phi_{-1} \gets v$; \tcp{start forming new $\phi$ toward $v$}
        $i \gets 0$; \tcp{upward phase to discover previous $\phi$} 
        \While{none of the leaders of clusters in $P_i(v)$ are $\phi_i$}
        {$\phi_i \gets l_i(v)$; \tcp{form new path $\phi$}
        Add bidirectional links between $\phi_i$ and $\phi_{i-1}$\;
        Inform special parent $l_{i'}$ at level $i' = i + \log_\rho (c'\sigma)$ about $\phi_i$\;
        $i++$\;}
        
        $old \gets$ level $i-1$ node pointed downwards by $\phi_i$\;
        
        Add bidirectional links between $\phi_i$ and $l_{i-1}(v)$; \tcp{adjust topmost node}
        
        Delete upward link of $old$ and information at special parent of $old$\;
        
        \tcp{downward phase to delete old directory path}
        \While{level of $old$ is not $-1$}
        {
        $w \gets$ node pointed by downward link of $old$\;
        Delete links between $w$ and $old$ and information at special parent of $w$\;
        $old \gets w$\;
        }
        Transfer token $t$ from $old$ to $v$; \tcp{$v$ is new owner}
    
    \end{algorithm}


\section{Conclusions}\label{sec: conclusion}

    We presented a fault-tolerant directory scheme based on sparse partitions that tolerates edge failures. We showed that the performance of the directory is linearly affected by the number of failures $f$. We showed how to adjust the clusters due to failures to transform the $\sigma$ and $I$ parameters, such that $\sigma$ simply doubles while $I$ is affected by either a $f$ factor (weak diameter clusters), or $f$ additive term (strong diameter clusters). 
    
    There are a few open questions that remain to be studied. One is to handle partitions of $G$ due to failures. The connected component that contains the token can still function and respond to operation requests. A related problem is examining the impact of node failures. If $G$ has bounded-degree $d$ a node failure corresponds to at most $d$ edge failures, then the techniques we developed could be adapted to analyze node failures.

    Another line of research related to preserving distances is building fault-tolerant sparse spanners. A sparse spanner of $G$ is a subgraph $H$ such that the pairwise distances on $G$ are stretched by a small factor on $H$. There exist fault-tolerant sparse spanners that maintain the stretch of the distances even after edge or node failures \cite{Bodwin2018Optimal,Parter2022Nearly}. Inspired by this, another direction is to design failure-oblivious sparse partitions with appropriate multiple pre-selected leaders in each cluster. Such leaders would be able to handle the failures without the need for cluster restructuring. We also believe that our work will help to analyze fault-tolerant sparse partitions in other settings than distributed directories.

\end{document}